\newtheorem{theorem}{Theorem}
\newtheorem{lemma}{Lemma}
\begin{document}

\preprint{APS/123-QED}

\title{Solving graph problems with single-photons and linear optics}% Force line breaks with \\
%\thanks{A footnote to the article title}%

\author{Rawad Mezher}
\email{rawad.mezher@quandela.com}
% \altaffiliation[Also at ]{Physics Department, XYZ University.}%Lines break automatically or can be forced with \\
\author{Ana Filipa Carvalho}
\email{filipa.goncalvescarvalho@quandela.com}
%
 %\email{Second.Author@institution.edu}
%\textbackslash\textbackslash
%}%
\author{Shane Mansfield}%
\email{shane.mansfield@quandela.com}
\affiliation{Quandela SAS, 7 Rue Léonard de Vinci, 91300 Massy, France%
}%\email{Second.Author@institution.edu}

%\collaboration{MUSO Collaboration}%\noaffiliation

\date{\today}

\begin{abstract}
 An important challenge for current and near-term quantum devices is finding useful tasks that can be preformed on them. We first show how to efficiently encode a bounded $n \times n$ matrix $A$ into a linear optical circuit  with $2n$ modes. We then apply this encoding to the case where $A$ is a matrix containing information about a graph $G$. We show that a photonic quantum processor consisting of single-photon sources, a linear optical circuit encoding $A$, and single-photon detectors can  solve a range of graph problems including finding the number of perfect matchings of bipartite graphs, computing permanental polynomials, determining whether two graphs are isomorphic, and  the $k$-densest subgraph problem.
We also propose pre-processing methods to boost the probabilities of observing the relevant detection events and thus improve  performance.
Finally we present both numerical simulations and implementations on Quandela's \textit{Ascella} photonic quantum processor to validate our findings.

\end{abstract}

\maketitle

\section{Introduction}
Quantum computing promises exponential speedups \cite{shor1994algorithms}, breakthroughs in quantum simulation \cite{georgescu2014quantum}, metrology \cite{giovannetti2006quantum,degen2017quantum}, and combinatorial optimization \cite{han2000genetic}, among other advantages. Yet existing and near-term quantum technologies \cite{preskill2018quantum} are extremely prone to errors that hinder performance and prospects of achieving advantages.  At present, 
building a large fault-tolerant quantum computer \cite{lidar2013quantum} remains a formidable technological challenge despite very promising theoretical guarantees \cite{aharonov1997fault}, notably relevant to photonic quantum technologies \cite{rudolph2017optimistic,bartolucci2021fusion,RHG07,kieling2007percolation},  as well as recent experimental advances \cite{arute2019quantum, zhao2022realization, zhong2020quantum,somaschi2016near,coste2022high,postler2022demonstration,marques2022logical}.

Current and near-term quantum devices are in the so-called Noisy Intermediate Scale Quantum (NISQ) regime \cite{preskill2018quantum}. These devices can provide an eventual route to large-scale fault-tolerant architectures, but in the nearer term are also particularly useful for implementing Variational Quantum Algorithms (VQAs) \cite{cerezo2021variational,chabaud2021quantum,perceval} with interesting performances, especially when coupled to error mitigation techniques \cite{endo2021hybrid}.

The focus here is on discrete-variable photonic NISQ devices, essentially composed of single-photon sources \cite{senellart2017high}, linear optical circuits \cite{Reck}, and single-photon detectors \cite{hadfield2009single}. 
Our main contribution is to show that, beyond VQAs,  such platforms can  implement a wide range of promising NISQ algorithms specifically related to solving  linear and graph problems.

We briefly comment on earlier related works in Section \ref{sec:previous} and set up preliminaries in Section \ref{sec:preliminaries}.
In Section \ref{sec:encoding} we find a procedure for encoding matrices, and by extension graphs via their adjacency matrices, into linear optical circuits. We go on to provide an analysis linking the photon detection statistics to properties of the matrices and graphs.
 
In Section \ref{sec:applications} we describe how this method can be used to solve a variety of graph problems \cite{grohe2020graph,graphbook,feige2001dense,merris1981permanental} which are at the core of a of use-cases across diverse fields \cite{fowler2013minimum,cash2000permanental,kasum1981chemical,trinajstic2018chemical,arrazola,kumar1999trawling,fratkin2006motifcut,arora2011computational,raymond2002maximum,grohe2020graph,bonnici2013subgraph}.

In Section \ref{sec:probabilityboosting}, we present pre-processing methods 
which improve the performance of our graph algorithms, thereby improving prospects of achieving \emph{practical} quantum advantages \cite{coyle2021quantum,gonthier2022measurements}.

In Section \ref{sec:numerics}, we perform numerical simulations with the \emph{Perceval} software platform \cite{perceval} to illustrate our encoding and some applications.
Finally in Section \ref{sec:encoding} we implement our methods on the cloud-accessible \textit{Ascella} photonic quantum processor \cite{ascella}, highlighting their interest for near-term technologies.

\section{Previous work}\label{sec:previous}
The encoding procedure we use is similar to the block encoding techniques studied in \cite{low2019hamiltonian,chakraborty2018power}. However, here we study a different set of applications  that can be understood within the Boson Sampling  framework \cite{AA11}
described in terms of linear optical modes and operations rather than qubits and qubit gates.

The Gaussian Boson Sampling (GBS) framework has previously been used to solve graph problems in \cite{bradler1,bradler2,arrazola,schuld}. The main differences between our encoding and that used in \cite{bradler1,bradler2,arrazola,schuld} are first that our setup uses single-photons as input to the linear optical circuit, whereas that of \cite{bradler1,bradler2,arrazola,schuld} uses squeezed states of light. Second, our encoding procedure is more general as it allows for encoding \emph{any} bounded $n \times n$ matrix into a linear optical circuit of $2n$ modes, whereas the encoding in \cite{bradler1,bradler2,arrazola,schuld}, because of the properties of squeezed states, can only encode Hermitian $n \times n$ matrices into 2$n$ mode linear optical setups. We discuss these differences in more detail in Appendix \ref{app:comp}.

\section{Preliminaries}
\label{sec:preliminaries}
We  denote the state of $n$ single-photons arranged in $m$
modes as $|\mathbf{n}\rangle:=|n_1...n_m\rangle$, where $n_i$ is the number of photons in the $ith$ mode, and $\sum_{i=1,..m}n_i=n$. There are $M:= { m+n-1 \choose n}$ distinct (and orthogonal) states of $n$ photons in $m$ modes. These states live in the Hilbert space $\mathcal{H}_{n,m}$ of $n$ photons in $m$ modes, which is isomorphic to the Hilbert space $\mathbb{C}^M$ \cite{AA11,Escartin}. $ \mathsf{U}(m)$ will denote the group of unitary $m \times m $ matrices. A linear optical circuit acting on $m$ modes is represented by a unitary $U \in \mathsf{U}(m)$ \cite{kok2007linear}, and its action on an input state of $n$ photons $|\mathbf{n_{in}}\rangle:=|n_{1,in}...n_{m,in}\rangle$ is given by

\begin{equation}
    \label{eqevolutionLO}
    |\psi\rangle:=\mathcal{U}|\mathbf{n_{in}}\rangle=\sum_{ n_1+...+n_m=n} \gamma_{\mathbf{n}}|\mathbf{n}\rangle,
\end{equation}
where $\mathcal{U} \in \mathsf{U}(M)$ represents the action of the linear optical circuit $U$, and $\gamma_{\mathbf{n}} \in \mathbb{C}$.
Further, we denote 
\begin{equation}
    \label{eqprobabilitiesbs}
    p_U(\mathbf{n}|\mathbf{n_{in}}):=|\gamma_{\mathbf{n}}|^2=\frac{|\mathsf{Per}(U_{\mathbf{n_{in}},\mathbf{n}})|^2}{n_{1,in}!...n_{m,in}!n_1!...n_m!},
\end{equation}
to be the probability of observing  the outcome $\mathbf{n}=(n_1,...,n_m)$ of $n_i$ photons in mode $i$, upon measuring the number of photons in each mode by means of number resolving single-photon detectors \cite{AA11}.  $U_{\mathbf{n_{in}},\mathbf{n}}$ is an $n \times n$ submatrix of $U$ constructed by taking $n_{i,in}$ times the $ith$ column on $U$, and $n_j$ times the $jth$ row of $U$, for $i,j \in \{1,\dots,m\}$ \cite{AA11}. 

 $\mathsf{Per}(.)$ denotes the matrix  permanent \cite{glynn2010permanent}. When there is no ambiguity about the unitary in question we will denote  $p_U(\mathbf{n}|\mathbf{n_{in}})$ as  $p(\mathbf{n}|\mathbf{n_{in}})$ for simplicity.

\section{Encoding}
\label{sec:encoding}
We will now show a method for encoding bounded matrices into linear optical circuits. 
  Let $A \in \mathcal{M}_n(\mathbb{C})$  be an $ n \times n$ matrix with complex entries and of bounded norm, and consider the singular value decomposition \cite{baker2005singular} of $A$
\begin{equation}
    \label{eqsvd}
    A=W\Sigma V^{\dagger},
\end{equation}
where $\Sigma$ is a diagonal matrix of singular values $\sigma_i(A) \geq 0$ of $A$, and $W,V \in \mathsf{U}(n)$. 
Let $s:=\sigma_{max}(A)$ be the largest singular value of $A$, and let
\begin{equation}
    \label{eqencAs}
    A_s:=\frac{1}{s}A=W\left(\frac{1}{\sigma_{max}(A)} \Sigma\right)V^{\dagger}.
\end{equation}
From Eq.(\ref{eqencAs}), it can be seen that $\sigma_{max}(A_s)\leq 1$, and therefore that the spectral norm \cite{horn1990norms} of $A_s$ satisfies
\begin{equation}
    \label{eqencnorm}
    \norm{A_s} \leq 1.  
\end{equation}
With Eq.(\ref{eqencnorm}) in hand, we can now make use of the \emph{unitary dilation theorem} \cite{Halmos}, which shows that when $\norm{A_s} \leq 1$, $A_s$ can be embedded into a larger block matrix
\begin{equation}
  \label{eqencUA}
  U_A:=\begin{pmatrix}
A_s & \sqrt{\mathbb{I}_{n \times n} -A_s(A_s)^{\dagger}} \\
\sqrt{\mathbb{I}_{n \times n} -(A_s)^{\dagger}A_s}  & -(A_s)^{\dagger}\\
\end{pmatrix},
\end{equation}
which is a unitary matrix. Here, $\sqrt{.}$ denotes the matrix square root, and $\mathbb{I}_{n \times n}$ the identity on $\mathsf{U}(n)$ \footnote{Note that since $\norm{A_s} \leq 1$, $\mathbb{I}_{n \times n} -(A_s)^{\dagger}A_s$ is positive semidefinite, and $\sqrt{\mathbb{I}_{n \times n} -(A_s)^{\dagger}A_s}$ is  the unique positive semidefinite matrix which is the square root of $\mathbb{I}_{n \times n} -(A_s)^{\dagger}A_s$. Similarly for $\mathbb{I}_{n \times n} -A_s(A_s)^{\dagger}$ and its square root.}.

Since $U_A \in \mathsf{U}(2n)$, there exists linear optical circuits of $2n$ modes which can implement it \cite{Reck,Clements}. Thus, we have found a way of encoding a (scaled-down version of) $A$ into a linear optical circuit. Note that determining the singular value decomposition  of $A$ can be done in time complexity $O(n^3)$ \cite{vasudevan2017hierarchical,pan1999complexity}.

Furthermore, finding the linear optical circuit for $U_A$ can also be done  in $O(n^2)$ time \cite{Reck,Clements}, thus making our encoding technique efficient. Finally, the choice of rescaling factor $s=\sigma_{max}(A)$ is not unique, as any $s \geq \sigma_{max}(A)$ gives $\norm{A_s} \leq 1$, allowing the application of the unitary dilation theorem \cite{Halmos}. However, choosing $s=\sigma_{max}(A)$ maximizes the output probability corresponding to $\mathsf{Per}(A_s)$, which can be seen from Eq.(\ref{eqprobabilitiesbs}), and the fact that $\mathsf{Per}(A_s)=\frac{1}{s^n}\mathsf{Per}(A)$.

The encoding via Eq.(\ref{eqencUA}) opens up the possibility of estimating $|\mathsf{Per}(A)|$ for any bounded  $A \in \mathcal{M}_{n}(\mathbb{C})$ by using the setup of Figure \ref{fig:BSdevice_graphs} composed of single-photons, linear optical circuits, and  single-photon detectors. Indeed, 
using $U_A$  and
\begin{equation}
\label{eqoutputinputchoice}
    \mathbf{n_{in}}=\mathbf{n}=\underbrace{(1 \dots,1}_{ n \ modes},0 \dots,0),
\end{equation}
 in Eq.(\ref{eqprobabilitiesbs}) gives
\begin{equation}
   \label{eqestimateperm}
    p(\mathbf{n}|\mathbf{n_{in}})=|\mathsf{Per}(A_s)|^2.
\end{equation}
Eq.(\ref{eqestimateperm}) admits a simple interpretation: passing the input $\mathbf{n_{in}}$ of Eq.(\ref{eqoutputinputchoice}) through  the circuit $U_A$ of Eq.(\ref{eqencUA}), then post-selecting on detecting the outcome $\mathbf{n}=\mathbf{n_{in}}$ and using these post-selected samples to estimate  $p(\mathbf{n}|\mathbf{n_{in}})$, allows one to estimate $|\mathsf{Per}(A_s)|$. Since
\begin{equation}
\label{eqrelbtwper}
\mathsf{Per}(A_s)=\mathsf{Per}\left(\frac{1}{s}A\right)=\frac{1}{s^n}\mathsf{Per}(A),
\end{equation}
 then one can also deduce an estimate of $|\mathsf{Per}(A)|$.
 \begin{figure}[h]
    \includegraphics[scale=0.25]{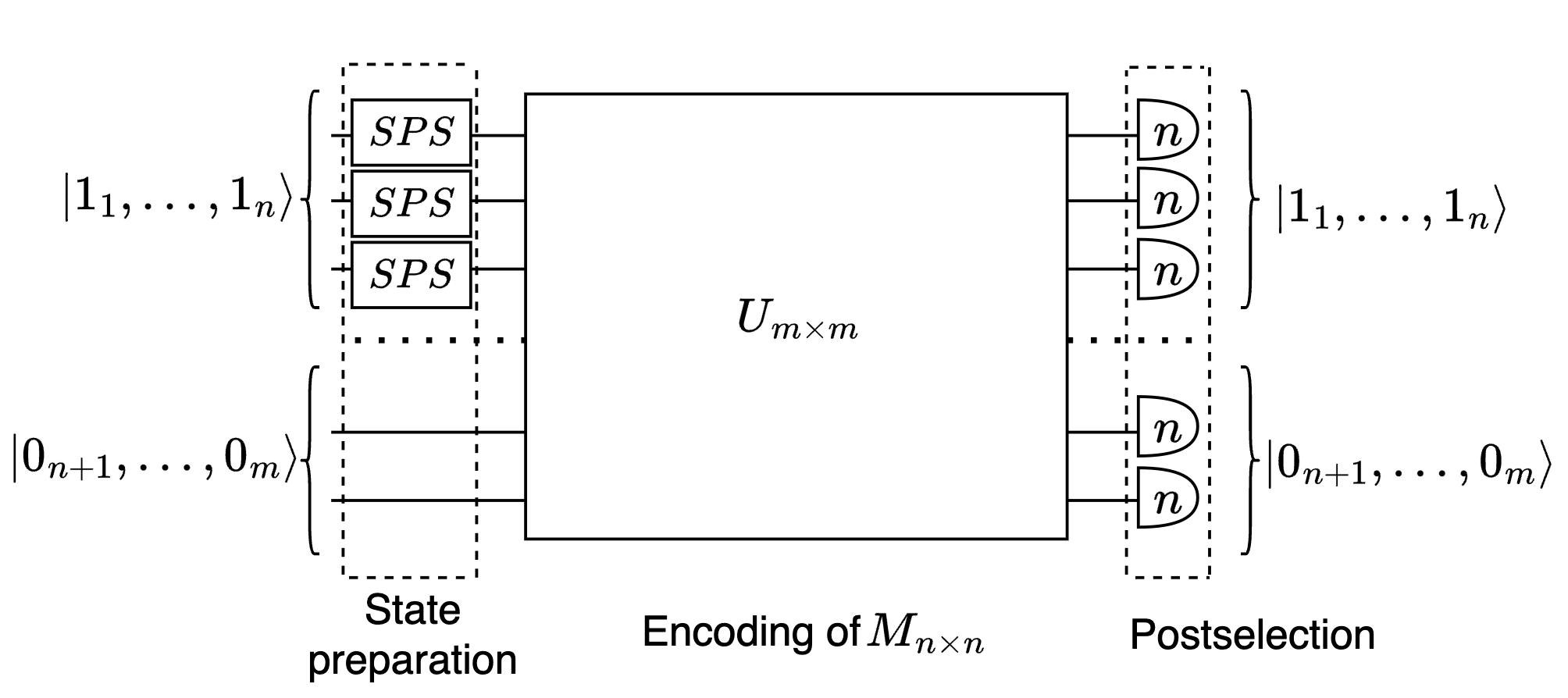}
    \caption{Setup for computing $|\mathsf{Per}(A)|$. First, an input state composed of $n$  single-photons emmitted by $n$ single-photon sources (boxes with $SPS$ label in the figure) is passed through a linear optical circuit  of size $m=2n$ encoding $A$ (the box with $U_{m \times m}$ labelling in the figure). We detect the output state by means of placing $m$ single-photon detectors at the output modes (semi-ellipses with $n$ label), and we post-select on observing $n$ photons in the first $n$ output modes.}
    
    \label{fig:BSdevice_graphs}
\end{figure}
 Furthermore, since we are interested in observing the outcome where at most one photon occupies a mode, then we can use (non-number resolving) threshold detectors, which simplifies the experimental implementation.
\newline \indent

\section{Applications}
\label{sec:applications}
When $A:=(a_{ij})_{i,j \in \{1,..,n\}}$, $a_{ij} \in \{0,1\}$, is the adjacency matrix \cite{graphbook} of a graph $G(V,E)$ (or $G$ for simplicity) with vertex set $V$ composed of $|V|=n$ vertices, and edge set $E$ composed of $|E|=I$ edges, it turns out that computing the permanent of $A$, as well as the permanent of matrices related to $A$, can be extremely useful for a multitude of applications. We will now go on to detail these applications.

\subsection{Computing the number of perfect matchings}
 A perfect matching is a set $E_M \subseteq E$ of independent $\frac{n}{2}$ edges  (no two edges have a common vertex), such that each vertex of $G$ belongs to exactly one edge of $E_M$. When  $G$ is a bipartite graph \cite{graphbook} with its two parts $V_1, V_2 \subset V$ being  of equal size $|V_1|=|V_2|=\frac{n}{2}$, the setup of Figure \ref{fig:BSdevice_graphs} along with Eq.(\ref{eqrelbtwper}) can be used to estimate the number of perfect matchings of $G$ denoted as $\mathsf{pm}(G)$, and given by $\mathsf{pm}(G)=\sqrt{\mathsf{Per}(A)}$ \cite{fuji1997note} \footnote{This holds for an ordering of vertices of $G$ such that we can write $A=\begin{pmatrix}
    0& C \\
    C^T& 0 
\end{pmatrix}$, where $C:=(c_{ij})$ is the biadjacency matrix of $G$, with $c_{ij}=0$ if $i \in V_1$ and $j \in V_2$ are not connected by an edge, and $c_{ij}=1$ otherwise.} \footnote {Note that exactly computing the number of perfect matchings of bipartite graphs is known to be  intractable (more precisely it is $\sharp$$\mathsf{P}$-complete \cite{valiant1979complexity}).}.

\subsection{Computing permanental polynomials}
Our setup can also be used to compute  \emph{permanental polynomials} \cite{merris1981permanental}. These are polynomials, taken here to be over the reals, of the form
\begin{equation}
    \label{eqpermpolynomial}
    P_A(x):=\mathsf{Per}(x\mathbb{I}_{n \times n}-A)=\sum_{i=0,\dots n}c_{i}x^i,
\end{equation}
$x^i$ being the ith power of $x$, and $A$ the adjacency matrix of any graph $G$. The coefficients $\{c_i\}$ are related to the permanents of the subgraphs of $G$ \cite{merris1981permanental}. 

Taking $B_x:=x\mathbb{I}_{n \times n}-A$, we can then compute the coefficients $\{c_i\}$ in Eq.(\ref{eqpermpolynomial}) by performing $n+1$ experiments, where in each experiment we encode $B_{x}$ into a linear optical circuit, and then estimate $\mathsf{Per}(B_{x})$ using the procedure in Figure \ref{fig:BSdevice_graphs} with $A$ replaced by $B_x$. For each experiment $j$,
 we choose a different value $x_j$ of $x$, for $j$ going from 1 to $n+1$. By doing this, we obtain a system of $n+1$ linear equations in $n+1$ unknowns $c_0,\dots, c_n$ . In Appendix \ref{app:permpoly}, we show that almost any random choice of $x_1, \dots, x_{n+1}$ will lead to a solution of this system of linear equations.

 \subsection{Densest subgraph identification}
 
 In the $k$-densest subgraph problem \cite{feige2001dense}, for a given graph $G$ with $n \geq k$ vertices, one must find an induced subgraph (henceforth refered to as subgraph for simplicity) of size $k$ with the maximal density (a $k$-densest subgraph). For a fixed $k$, the densest subgraph is that which has the highest number of edges.
 
 Solving the $k$-densest subgraph problem exactly is $\mathsf{NP}$-Hard \cite{garey1979guide}.

We first give some intuition for why the permanent is a useful tool for identifying dense subgraphs.
 
  Let $\mathcal{S}_n$ be the group of permutations of $\{1,\dots,n\}$, and $A$ the adjacency matrix of $G$. Looking at how the permanent of $A$ is computed
\begin{equation}
    \label{eqpermexpansion}
    \mathsf{Per}(A)=\sum_{\pi \in \mathcal{S}_n}\prod_{i=1 \dots n}a_{i\pi(i)},
\end{equation}
it can be seen that for fixed $n$, the value $\mathsf{Per}(A)$ should increase with increasing the number of non-zero $a_{ij}$, which directly corresponds to increasing the number of edges. 

We make this intuition concrete by proving the following.

\begin{theorem}
\label{thperm}
 For even $n$ and $I$
    \begin{equation}
        \label{equppboundper}
        \mathsf{Per}(A) \leq f(n,I),
    \end{equation}
    where $f(n,I)$ is a function which is monotonically increasing with increasing $I$ for fixed $n$.
    \end{theorem}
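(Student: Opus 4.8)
\emph{Proof strategy.} The plan is to bound $\mathsf{Per}(A)$ by a quantity depending only on the degree sequence of $G$ and then maximize that quantity over all degree sequences with $I$ edges. Write $d_i:=\deg(v_i)$; these are the row sums of the $0$-$1$ matrix $A$, and $\sum_{i=1}^{n} d_i=2I$. The starting point is the Bregman--Minc inequality: for any $0$-$1$ matrix with row sums $d_1,\dots,d_n$,
\begin{equation}
\mathsf{Per}(A)\ \le\ \prod_{i=1}^{n}(d_i!)^{1/d_i}
\end{equation}
(with the convention $(0!)^{1/0}:=1$). Each factor $(d_i!)^{1/d_i}$ is nondecreasing in $d_i$, which already reflects the stated intuition. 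I then set $f(n,I):=\max\{\prod_{i=1}^{n}(r_i!)^{1/r_i}:r_i\in\{0,\dots,n-1\},\ \sum_{i=1}^{n}r_i=2I\}$; since the true degree sequence is one feasible point of this optimization, $\mathsf{Per}(A)\le f(n,I)$ for every graph on $n$ vertices with $I$ edges, and it remains only to show that $f(n,I)$ is monotonically increasing in $I$ for fixed $n$.

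For the monotonicity I would argue directly from a maximizer $(r_1^\star,\dots,r_n^\star)$, using only that $g(i):=(i!)^{1/i}$ is nondecreasing and strictly increasing for $i\ge1$. If some $r_j^\star\le n-3$, replacing it by $r_j^\star+2$ gives a point feasible for $f(n,I+1)$ with strictly larger objective, since $g(r_j^\star+2)>g(r_j^\star)$ (strict even at $r_j^\star=0$, as $g(2)=\sqrt{2}>1=g(0)=g(1)$). Otherwise every $r_i^\star\in\{n-2,n-1\}$: if all equal $n-1$ then $2I=n(n-1)$, so $G$ is complete and $I$ is already maximal; if exactly one equals $n-2$ then $2I=(n-1)^2+(n-2)$ is odd, which is impossible; hence at least two equal $n-2$, and incrementing two of them to $n-1$ again gives a feasible point for $f(n,I+1)$ with strictly larger objective (for $n\ge4$; the case $n=2$ gives $\mathsf{Per}(A)\le1$ and is trivial). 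Thus $f(n,I+1)>f(n,I)$ throughout the relevant range. Notably, this does not require an explicit description of the maximizer — which is just as well, since $g$ equals $1$ on $\{0,1\}$, so the true maximizer concentrates degree to "escape" this flat region rather than balancing it.

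The step I expect to require the most care is producing a clean closed form for $f(n,I)$ should one be wanted: because of exactly that flat region, the maximizer is not the balanced degree sequence, and pinning it down is fiddly — but for the theorem as stated only the implicit definition and its monotonicity are needed, which the perturbation argument above supplies. Everything else (the Bregman--Minc inequality, the monotonicity of $(i!)^{1/i}$, the parity check, the boundary case $G=K_n$) is routine, and evenness of $n$ does not seem to be needed for this route. Finally, if one is content with a cruder but fully explicit $f$, the whole argument collapses to $\mathsf{Per}(A)\le\prod_i d_i\le(2I/n)^n=:f(n,I)$, where the first inequality counts row by row the at most $d_i$ admissible values of $\pi(i)$ in the permanent expansion~\eqref{eqpermexpansion} and the second is the AM--GM inequality; this $f(n,I)$ is manifestly increasing in $I$.
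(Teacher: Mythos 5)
Your proof is correct, and after the shared opening move it takes a genuinely different route from the paper's. Both arguments begin with the Bregman--Minc bound $\mathsf{Per}(A)\le\prod_i(r_i!)^{1/r_i}$ in terms of the row sums (Eq.~(\ref{eqlemm21})). From there the paper imports from an external reference an explicit closed-form bound $\omega(n,I)$ on $\prod_i(r_i!)^{1/(2r_i)}$ for simple graphs with even $n$ and $I$, squares it to define $f(n,I):=\omega(n,I)^2$, and cites yet another reference for the monotonicity of $\omega$ in $I$. You instead define $f(n,I)$ implicitly as the maximum of the Bregman--Minc product over all integer vectors in $\{0,\dots,n-1\}^n$ summing to $2I$, and prove monotonicity yourself by perturbing a maximizer (adding $2$ to one coordinate, or $1$ to each of two near-maximal coordinates), with a parity argument ruling out the one awkward boundary configuration. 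Your route buys three things: it is self-contained rather than resting on two citations; it yields \emph{strict} increase; and it drops the evenness hypothesis on $n$ and $I$ entirely, since evenness of $\sum_i r_i$ is automatic. It also sidesteps a genuine delicacy in the paper's chain: the cited inequality $\prod_i(r_i!)^{1/(2r_i)}\le\omega(n,I)$ is essentially a ``balanced degree sequence maximizes'' statement, which needs care precisely because $(i!)^{1/i}$ is flat on $\{0,1\}$ (e.g.\ for $n=4$, $I=2$ and degree sequence $(2,1,1,0)$ one gets $2^{1/4}>\omega(4,2)=1$, though the final conclusion survives since such graphs have vanishing permanent); your perturbation argument never needs to identify the maximizer and so is immune to this. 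What the paper's route buys in exchange is an explicit closed form for $f$, which your main argument deliberately avoids; your fallback bound $\mathsf{Per}(A)\le\prod_i d_i\le(2I/n)^n$ recovers explicitness, and even bypasses Bregman--Minc, at the cost of a much weaker bound. Both establish the theorem as stated.
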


Theorem \ref{thperm} is proven in Appendix \ref{app:kdensest}. 
 Theorem \ref{thperm} does not  prove that $\mathsf{Per}(A)$ is a monotonically increasing function of $I$, rather that it is upper bounded by such a function. In Appendix  \ref{app:numerics}, we provide  numerical evidence that for random graphs, $\mathsf{Per}(A)$ is in general a monotonically increasing function of $I$, by plotting the value of $\mathsf{Per}(A)$ versus  $I$, for various values $n$, and for randomly generated graphs.

Taking Eq.(\ref{equppboundper}) together with Eq.(\ref{eqprobabilitiesbs}) we make the  observation that denser subgraphs have a higher probability of being sampled \footnote{This observation was first made in \cite{arrazola}, and applies to our photonic setup as well}. However, at first glance our  setup does not seem very natural for sampling subgraphs. Indeed, subgraphs of $G$  of size $k$ have adjacency matrices of the form $A_{\mathbf{n},\mathbf{n}}$, where $\mathbf{n}=(n_1,...,n_m)$  with $n_i \in \{0,1\}$ and $\sum_{i}n_i=k$ (the same rows and columns are used in constructing the submatrix of $A$) \cite{arrazola}.  However, Eq.(\ref{eqprobabilitiesbs}) shows that submatrices of the form $A_{\mathbf{n_{in}},\mathbf{n}}$ are sampled in  our setup, and these matrices are not in general subgraphs unless $\mathbf{n}=\mathbf{n_{in}}$.

To get around this issue, we  encode a matrix $K$, different to $A$, into our linear optical setup. Consider a set  $\mathcal{S}=\{A_{\mathbf{n_1},\mathbf{n_1}}, \dots,A_{\mathbf{n_J},\mathbf{n_J}} \}$ of $J$ subgraphs of $G$, define $K$ as the block matrix
\begin{equation}
    \label{eqdensesubgraph1}
    K:=\begin{pmatrix} A_{\mathbf{n_1},\mathbf{n_1}} \\ A_{\mathbf{n_2},\mathbf{n_2}} 
    \\ . &&  \mathbf{0}_{kJ \times kJ-k}
    \\ .
    \\ .
    \\ A_{\mathbf{n_J},\mathbf{n_J}} \end{pmatrix}.
\end{equation}
$K$ is an $kJ \times kJ$ matrix, where the $k \times k$ block composed of rows $(j-1)k+1$ to $jk$, and columns 1 to $k$ is the subgraph $A_{\mathbf{n_j}, \mathbf{n_j}}$. Encoding $K$ into a linear optical circuit $U_{{K}}$ of $m:=2kJ$ modes,   and choosing an input of $k$  photons
\begin{equation}\label{eq:inputDSI}
    \mathbf{n_{in}}=\underbrace{(1, \dots,1}_{modes \ 1 \ to \ k},0 \dots,0), 
\end{equation}

then passing this input through the circuit $U_{{K}}$, and post-selecting on observing the outcomes
\begin{equation}\label{eq:outputDSI}
\mathbf{n_{out,j}}:=(0, \dots 0, \underbrace{1, \dots 1}_{modes \ (j-1)k+1 \ to \ jk }, 0 \dots 0),
\end{equation}
for $j \in \{1,\dots,J\}$
allows one to estimate the probabilities 
\begin{multline}
    \label{eqdensesubgraph2}
    p(\mathbf{n_{out,j}}|\mathbf{n_{in}})=\frac{1}{\sigma^{k}_{max}(K)}|\mathsf{Per}(K_{\mathbf{n_{in}},\mathbf{n_{out,j}}})|^2= \\ \frac{1}{\sigma^{k}_{max}(K)}|\mathsf{Per}(A_{\mathbf{n_j},\mathbf{n_j}})|^2.
\end{multline}
As seen previously, the densest subgraph will naturally appear more times in the sampling. 
Note that while $\mathsf{Per}(K)=0$ since it has columns composed entirely of zeros, our procedure relies on sampling from the sub-matrices $A_{\mathbf{n_j},\mathbf{n_j}}$ of $K$ which in general have non-zero permanent and thus non-zero probability of appearing.

The practicality of our setup depends on $J$. For example, if one wants to look at all possible subgraphs of $G$ of size $k$, then $J={n \choose k} \approx n^k$, and therefore $m=2kJ \approx kn^k$, meaning we would need  a linear optical circuit with number of modes exponential in $k$, which is impractical. Nevertheless, we will now show a  useful and practical application for our setup. More precisely, we show how to use our setup with $J=\mathsf{Poly}(n)$ to  improve the solution accuracy of a classical algorithm which approximately solves the $k$-densest subgraph problem \cite{bourgeois2013exact}.

One of the classical algorithms developed in \cite{bourgeois2013exact} approximately solves the $k$-densest subgraph problem by first identifying the $\lceil \rho k \rceil$  vertices, with   $ 0 \leq \rho \leq 1$,  of the densest subgraph of $G$ of size $\lceil \rho k \rceil$, call these vertices $v_1$ to $v_{\lceil \rho k \rceil}$, and  then chooses the remaining $\lfloor (1-\rho)k \rfloor$ vertices arbitrarily.  The identification of vertices  $v_1$ to $v_{\lceil \rho k \rceil}$ is done through an algorithm which \emph{exactly} solves the $\lceil \rho k \rceil$-densest   subgraph problem. The runtime of this algorithm is  $O(c^n)$ and thus exponential in $n$, where $c>1$  generally depends on the ratio $\frac{\rho k}{n}$ \cite{bourgeois2013exact}.\newline \indent
Our approach is to replace the arbitrary choice of the remaining $\lfloor (1-\rho)k \rfloor$ vertices from \cite{bourgeois2013exact}  with the following algorithm. First, identify all subgraphs of size $k$ with their first $\lceil \rho k \rceil$ vertices being $v_1$ to $v_{\lceil \rho k \rceil}$. Then, encode these into our setup  (see Eq.(\ref{eqdensesubgraph1}) to (\ref{eqdensesubgraph2})). The number of these subgraphs is 
\begin{equation}
\label{eqvalofM}
J={n-\lceil \rho k \rceil \choose \lfloor (1-\rho)k \rfloor } \leq n^{(1-\rho)k}.
\end{equation}
Choosing $\rho=1-O(\frac{1}{k})$, and substituting into Eq.(\ref{eqvalofM}) gives $$J \approx n^{O(1)} = \mathsf{Poly}(n).$$
Thus, when a  majority  of vertices of the densest subgraph have been determined classically, our encoding can be  used to identify the remaining vertices, by using linear optical circuits acting on a $\mathsf{Poly}(n)$ number of modes.

Depending on how accurately we estimate the probabilities in Eq.(\ref{eqdensesubgraph2}), we can in principle boost the accuracy of the approximate solution of \cite{bourgeois2013exact}.
\newline

\subsection{Graph Isomorphism}

Given two (unweighted, undirected) graphs $G_1(V_1,E_1)$ and $G_2(V_2,E_2)$ with $|V_1|=|V_2|=n$, and with respective adjacency matrices $A$ and $B$, $G_1$ is isomorphic to $G_2$ iff $B=P_{\pi}AP^T_{\pi}$, for some $P_{\pi} \in \mathcal{P}_n$, the group of $n \times n$ permutation matrices. We now explore the graph isomorphism problem (GI): the problem of determining whether two given graphs are isomorphic 

\footnote{GI is believed to lie in the complexity class $\mathsf{NP}$-intermediate, with the best classical algorithm for determining whether two graphs are isomorphic running in quasipolynomial time \cite{babai2016graph}}. GI has previously been investigated in  the framework of quantum walks \cite{QW1,QW2}, as well as in Gaussian Boson Sampling \cite{bradler2}. Here, we show how to use our photonic setup to solve GI.

More concretely, let $l \in \{1,\dots,n\}$, 
and $\mathbf{t}:=\{t_1,\dots,t_l\}, \mathbf{s}:=\{s_1,\dots,s_l\}$ with $s_i,t_i \in \{1, \dots, n\}$, and $s_{i+1} \geq s_i$, $t_{i+1} \geq t_i$, for all $i$. Let $B_{\mathbf{t}, \mathbf{s}}$ be an $l \times l$ submatrix of $B$ constructed first by constructing an $l \times n$ matrix $B_{\mathbf{s}}$ such that the $ith$ row of $B_{ \mathbf{s}}$ is the $s_ith$ row of $B$, and then constructing  $B_{\mathbf{t}, \mathbf{s}}$ such that its $jth$ column is the $t_jth$ column of $B_{\mathbf{s}}$. In Appendix \ref{app:GI} we prove the following theorem.
\begin{theorem}
\label{th1}
Let $G_1$ and $G_2$ be two unweighted, undirected, isospectral (having the same eigenvalues) graphs with $n$ vertices, and with no self loops. Let   $A$ and $B$ be the respective adjacency matrices of $G_1$ and $G_2$. The following two statements are equivalent:  \\ 

1) There exists a fixed bijection $\pi:\{1,\dots,n\} \to \{1,\dots,n\} $ such that for all $l$, $\mathbf{s}$, $\mathbf{t}$, the following is satisfied $$\mathsf{Per}(A_{\mathbf{\pi(t)},\mathbf{\pi(s)}})=\mathsf{Per}(B_{\mathbf{t},\mathbf{s}}),$$
with $\mathbf{\pi(s)}=\{\pi(s_1),\dots, \pi(s_l)\}$, $\mathbf{\pi(t)}=\{\pi(t_1),\dots, \pi(t_l)\}$.

\bigskip
2) $G_1$ is isomorphic to $G_2$.
\end{theorem}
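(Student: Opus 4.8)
The plan is to prove the two implications $(2)\Rightarrow(1)$ and $(1)\Rightarrow(2)$ separately. In both directions the only facts I expect to need are that $\mathsf{Per}$ is unchanged under independent permutations of the rows and of the columns of a matrix, and the observation that statement~(1), being quantified over \emph{all} $l\in\{1,\dots,n\}$, in particular contains the case $l=1$, where the permanent of a $1\times 1$ matrix is just its single entry.

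To get $(2)\Rightarrow(1)$, I would start from the isomorphism: $B=P_\pi A P_\pi^{T}$ for some $P_\pi\in\mathcal{P}_n$, and let $\pi$ denote the associated bijection of $\{1,\dots,n\}$, so $b_{ij}=a_{\pi(i),\pi(j)}$ for all $i,j$ (whether the isomorphism is carried by $\pi$ or $\pi^{-1}$ is immaterial for the conclusion). Then for any $l,\mathbf{s},\mathbf{t}$, reading off entries gives $(B_{\mathbf{t},\mathbf{s}})_{ij}=b_{s_i,t_j}=a_{\pi(s_i),\pi(t_j)}$, i.e. $B_{\mathbf{t},\mathbf{s}}$ is the submatrix of $A$ on rows $\{\pi(s_1),\dots,\pi(s_l)\}$ and columns $\{\pi(t_1),\dots,\pi(t_l)\}$; putting these index lists into increasing order (as the notation $A_{\pi(\mathbf{t}),\pi(\mathbf{s})}$ requires) only permutes rows and columns and hence leaves the permanent invariant, so $\mathsf{Per}(A_{\pi(\mathbf{t}),\pi(\mathbf{s})})=\mathsf{Per}(B_{\mathbf{t},\mathbf{s}})$. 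Since the same $\pi$ works for every $l,\mathbf{s},\mathbf{t}$, this is exactly statement~(1). Note that isospectrality and the no-self-loops assumption are not actually needed for this direction: the former is automatic once the graphs are isomorphic, the latter merely makes the $s=t$ comparisons trivial.

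For $(1)\Rightarrow(2)$, I would take a bijection $\pi$ as in statement~(1) and specialise to $l=1$: then $A_{\pi(\mathbf{t}),\pi(\mathbf{s})}=[\,a_{\pi(s_1),\pi(t_1)}\,]$ and $B_{\mathbf{t},\mathbf{s}}=[\,b_{s_1,t_1}\,]$, so the hypothesis becomes $a_{\pi(s),\pi(t)}=b_{s,t}$ for all $s,t\in\{1,\dots,n\}$. Since $\pi$ is a bijection, this says precisely $\{s,t\}\in E_2 \iff \{\pi(s),\pi(t)\}\in E_1$, i.e. $\pi$ realises a graph isomorphism between $G_2$ and $G_1$ (the $s=t$ instances being the vacuous $0=0$). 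Hence $G_1\simeq G_2$, and the two statements are equivalent.

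I do not expect a deep obstacle: the argument is short, and the points needing care are essentially notational — tracking whether $\pi$ or $\pi^{-1}$ carries the isomorphism, and checking that the sorting conventions on $\mathbf{s},\mathbf{t}$ do not obstruct the permanent comparison (they do not, by row/column invariance of $\mathsf{Per}$). Two remarks seem worth including. First, although correctness only uses the classically-trivial $l=1$ instances, a practical GI test on the photonic device would exploit larger $l$, where the relevant permanents are classically hard, with isospectrality serving as a cheap necessary filter (and the permanental polynomials of Eq.(\ref{eqpermpolynomial}) as a way to organise the comparisons) before running the costly ones. Second, if one instead demanded a version of statement~(1) quantified only over large $l$, the genuine difficulty would become reconstructing the $0/1$ matrix $A$ up to relabelling from permanents of a restricted family of its submatrices, which I would attempt by induction on submatrix size with isospectrality used to pin down a consistent global labelling.
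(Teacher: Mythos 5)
Your proof is correct and follows essentially the same route as the paper: for $(2)\Rightarrow(1)$ both arguments read off $b_{ij}=a_{\pi(i)\pi(j)}$ from $B=P_\pi AP_\pi^T$ and identify $B_{\mathbf{t},\mathbf{s}}$ with $A_{\pi(\mathbf{t}),\pi(\mathbf{s})}$ up to the (permanent-preserving) reordering of indices, and for $(1)\Rightarrow(2)$ both specialise to degenerate submatrices that isolate a single entry. Your $l=1$ shortcut is in fact a mild simplification of the paper's version, which takes $\mathbf{s}=\{i,\dots,i\}$, $\mathbf{t}=\{j,\dots,j\}$ of general length $l$ to get $b_{ij}^{l}=a_{\pi(i)\pi(j)}^{l}$ and then invokes the $\{0,1\}$-entry assumption, a step your argument does not need.
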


Practically,  Theorem \ref{th1} implies that a protocol consisting of  encoding $A$ and $B$ into linear optical circuits $U_{A}$ and $U_{B}$, and examining  the output probability distributions resulting from passing $l$ single-photons through $U_{A}$
and $U_{B}$, for variable $l$ ranging from 1 to $n$ and for all possible ${n+l-1 \choose l}$  arrangements of $l$ input photons in the first $n$ modes, is  necessary and sufficient  for $G_1$ and $G_2$ to be isomorphic.

Theorem \ref{th1} is an interesting theoretical observation, but its  utility as a method for solving GI is clearly limited by the number of required experimental rounds, $\sum_{l=1,\dots,n}{n+l-1 \choose l}$ , which scales exponentially in $n$. However, by using the fact that our setup naturally computes permanents, we can import powerful permanent-related tools from the field of graph theory to distinguish non-isomorphic graphs \cite{merris1981permanental,wu2022characterizing,liu2017bivariate}. For example, one of these tools, which we use in our numerical simulations in Section \ref{sec:numerics}, is the  Laplacian permanental polynomial \cite{merris1981permanental}, defined here  over the reals, which for a  graph $G$ with Laplacian $L(G)$ has the form
\begin{equation}
\label{eqlappermpoly}
    P_L(x):=\mathsf{Per}(x\mathbb{I}_{n \times n}-L(G)).
    \end{equation}
    
Laplacian permanental polynomials are  particularly useful for GI. It is known that equality of the Laplacian permanental polynomials of $G_1$ and $G_2$ is a necessary condition for these graphs to be isomorphic \cite{merris1981permanental}. Furthermore, this equality is known to be a necessary and sufficient condition within many families of graphs \cite{wu2022characterizing,liu2019signless}, although families are also known for which sufficiency does not hold \cite{merris1991almost}.
Other polynomials based on permanents are also studied \cite{liu2017bivariate,liu2019signless}.  All of these polynomials can be computed within our setup, similarly to how one would compute the polynomial of Eq.(\ref{eqpermpolynomial}).

    \section{Sample Complexities}
    \label{sec:samplecomplexities}
    At this point we comment on the distinction between \emph{estimating} $|\mathsf{Per}(A)|$ and (exactly) computing $|\mathsf{Per}(A)|$ for some $A \in \mathcal{M}_n(\mathbb{C})$. When running experiments using our setup, one obtains an estimate of $|\mathsf{Per}(A)|$, by estimating $p(\mathbf{n}|\mathbf{n_{in}})$ from samples obtained from many runs of an experiment. 
    
    With this in mind, one can use Hoeffding's inequality \cite{hoeffding1994probability} to estimate $p(\mathbf{n}|\mathbf{n_{in}})$ (and consequently $|\mathsf{Per}(A)|$) to within an additive error $\frac{1}{\kappa}$ by performing $O(\kappa^2)$ runs, with $\kappa \in \mathbb{R^{+*}}$. In practice, one usually aims at performing an efficient number of runs, that is $\kappa =\mathsf{Poly}(n)$. At this point, it becomes clear that estimating permanents using our devices will not give a superpolynomial quantum-over-classical advantage, as for example the classical Gurvits algorithm \cite{gurvits2005complexity,aaronson2012generalizing} can estimate permanents to within $\frac{1}{\mathsf{Poly}(n)}$ additive error in $\mathsf{Poly}(n)$-time. However, our techniques can still potentially lead to \emph{practical} advantages \cite{coyle2021quantum,gonthier2022measurements} over their classical counterparts for specific examples and in specific applications.

   \section{Probability Boosting}
   \label{sec:probabilityboosting}
    We strengthen the case for practical advantage by demonstrating two techniques which allow for a better approximation of $\mathsf{Per}(A)$ using less samples. These techniques boost the probabilities of seeing the most relevant outcomes. They rely on modifying the matrix $A$, then encoding these modified versions in our setup. However, care must be taken so that the modifications allow us to  efficiently recover back the value of $\mathsf{Per}(A)$.

    Let $\mathbf{A}_{i}$ denote the $ith$ row of $A$, and $c \in \{1,\dots,n\}$ be a fixed row number. Let $A_w$ be a matrix, its $ith$ row $\mathbf{A_w}_{i}$ is given by:  $\mathbf{A_w}_{i}=\mathbf{A}_i$ for all $i \neq c$, and $\mathbf{A_w}_{c}=w\mathbf{A}_c$, with $w \in \mathbb{R}^{+*}$ .
    Our first technique for boosting is inspired by the observation following from Eq.(\ref{eqpermexpansion}) that
    
    \begin{equation}
        \label{eqrelAwA}
        \mathsf{Per}(A_w)=w\mathsf{Per}(A).
    \end{equation}
    Thus, when $w>1$, this modification boosts the value of the permanent. However, in order to boost the probability of appearance of desired outputs using this technique, the ratio of the largest singular values $\sigma_{max}(A)$ and $\sigma_{max}(A_w)$ of  $A$ and $A_w$ must be carefully considered (see Eq.(\ref{eqrelbtwper})).
     
    In  Appendix \ref{app:boosting}, we show that
    \begin{equation}
    \label{eqboostmaintext}
    \frac{\sigma_{max}(A)}{\sigma_{max}(A_w)} > \frac{1}{w^{\frac{1}{n}}},
    \end{equation}
    is a necessary condition for boosting to occur using this technique. We also find examples of graphs $G$ where this condition is satisfied.
   
    For a fixed $n>1$, in the limit of large $w$, we find  that $\sigma_{max}(A_w) \approx O(w)$ (see Lemma \ref{lemboundsAw}), meaning $\frac{\sigma_{max}(A)}{\sigma_{max}(A_w)} \approx O(\frac{1}{w})$, indicating that the condition of Eq.(\ref{eqboostmaintext}) is violated. This means that beyond some value $w_0$ of $w$, depending on $A$, boosting no longer occurs.
    \newline \indent
  
   The second technique for probability boosting we develop  takes inspiration from the study of permanental polynomials \cite{merris1981permanental}. Consider the matrix
    \begin{equation}
    \label{eqAeps}
    \tilde{A}_{\varepsilon}=A+\varepsilon \mathbb{I}_{n \times n},
    \end{equation}
    with $\varepsilon \in \mathbb{R}^+$.
    Using the expansion formula for the permanent of a sum $A+\varepsilon \mathbb{I}_{n \times n}$ of two matrices \cite{krauter1987theorem}, we obtain
    \begin{equation}
        \label{eqAeps2}
        \mathsf{Per}(\tilde{A}_{\varepsilon})=\mathsf{Per}(A)+\sum_{i=1,..,n}c_i\varepsilon^{i},
    \end{equation}
   where, as in the case of the permanental polynomial, $c_i$ is a sum of permanents of submatrices of $A$ of size $n-i \times n-i$ \cite{krauter1987theorem}. If $A$ is a matrix with non-negative entries, then $c_i \geq 0$, and therefore
   $$\mathsf{Per}(\tilde{A}_{\varepsilon}) \geq \mathsf{Per}(A).$$
   Here again, the value of the permanent is boosted, and one can recover the value of $\mathsf{Per}(A)$ by computing $\mathsf{Per}(\tilde{A}_{\varepsilon})$ for $n+1$ different values of $\varepsilon$, then solving the system of linear equations in $n+1$ unknowns to determine the set of values $\{\mathsf{Per}(A),\{c_i\}\}$. As with the previous technique, the boosting provided by this method ceases after a certain value $\varepsilon_0$ of $\varepsilon$ for a fixed $n$, as shown in  Appendix \ref{app:boosting}. 
\section{Numerical simulations}
\label{sec:numerics}
In this section, we highlight some of the numerical simulations  performed to test our encoding as well as our applications. All  simulations were performed using the \emph{Perceval} software platform \cite{perceval}. Our code, as well as a full description of how to use it is available at \footnote{\href{https://github.com/Quandela/matrix-encoding-problems}{https://github.com/Quandela/matrix-encoding-problems}.}.

 We performed simulations for estimating the permanent of a matrix $A$ by encoding it into a linear optical circuit, and post-selecting as in Figure \ref{fig:BSdevice_graphs}. An example is provided in Table \ref{table:permEst}. We construct random graphs of six vertices with various edge probabilities $p \in [0,1]$, where $p$ represents the probability that vertices $i$ and $j$ are connected by an edge. For each $p$, we construct four 
random graphs with respective adjacency matrices $A_1, \dots, A_4$, and compute an estimate $\mathsf{E}(\mathsf{Per}(A_r))$ of $\mathsf{Per}(A_r)$ for each $r \in \{1, 2, 3, 4\}$. This estimate is computed by using 500 post-selected samples. We then compute the mean estimate $\mu_{estimate}:=\sum_{r=1, \dots, 4} \frac{\mathsf{E}(\mathsf{Per}(A_r))}{4}$. Table \ref{table:permEst} shows $\mu_{estimate}$ and the mean exact value  $\mu_{exact}:=\sum_{r=1, \dots, 4} \frac{\mathsf{Per}(A_r)}{4}$ with respect to $p$. As can be seen in  Table \ref{table:permEst}, a close agreement is observed between exact and estimated values.

\begin{table}[h]
\centering
  \begin{tabular}{ | c | c | c |} \hline
      $p$ & $\mu_{exact}$ & $\mu_{estimate}$\\ \hline\hline
     $0.70$ & $43.25$ & $44.34$\\ \hline
     $0.78$ & $83.50$ & $82.97$\\ \hline
     $0.86$ & $109.25$ & $109.36$  \\ \hline
     $0.94$ & $155.25$ & $156.97$ \\ \hline
     $1.00$ & $265.00$ & $265.66$ \\ \hline
  \end{tabular}
\caption{Mean value of estimation and calculation of permanent for random graphs of 6 vertices. For each edge probability 4 graphs were generated, and the mean estimated value of the permanent is computed by taking the average of the estimated values of the permanent of these four graphs. These estimates are obtained from $500$ post-selected samples}
\label{table:permEst}
\end{table}

\begin{comment}
\begin{figure}[h]
    \centering
    \includegraphics[scale=0.5]{estimatevsexact.png}
    \caption{\textcolor{red}{Mean value of estimation and calculation of permanent for random graphs of 6 vertices. For each edge probability 4 graphs were generated, and the mean estimated value of the permanent is computed by taking the average of the estimated values of the permanent of these four graphs. These estimates are obtained from 500 post-selected samples}}
    \label{fig:permEst}
\end{figure}
\end{comment}

For dense subgraph identification we  wrote code which, given access to a subset $\mathcal{S}$ (of size less than $k$) of  vertices of the densest subgraph, first constructs all possible subgraphs of size $k$ containing 
all vertices from $\mathcal{S}$,  then encodes these subgraphs into a single linear optical circuit (see Eq.(\ref{eqdensesubgraph1})-(\ref{eqdensesubgraph2})), and samples outputs from this circuit.
To test our code and our technique, we considered the graph of Figure \ref{fig:randomDSI}.
 \begin{figure}[h]
    \centering
    \includegraphics[scale=0.5]{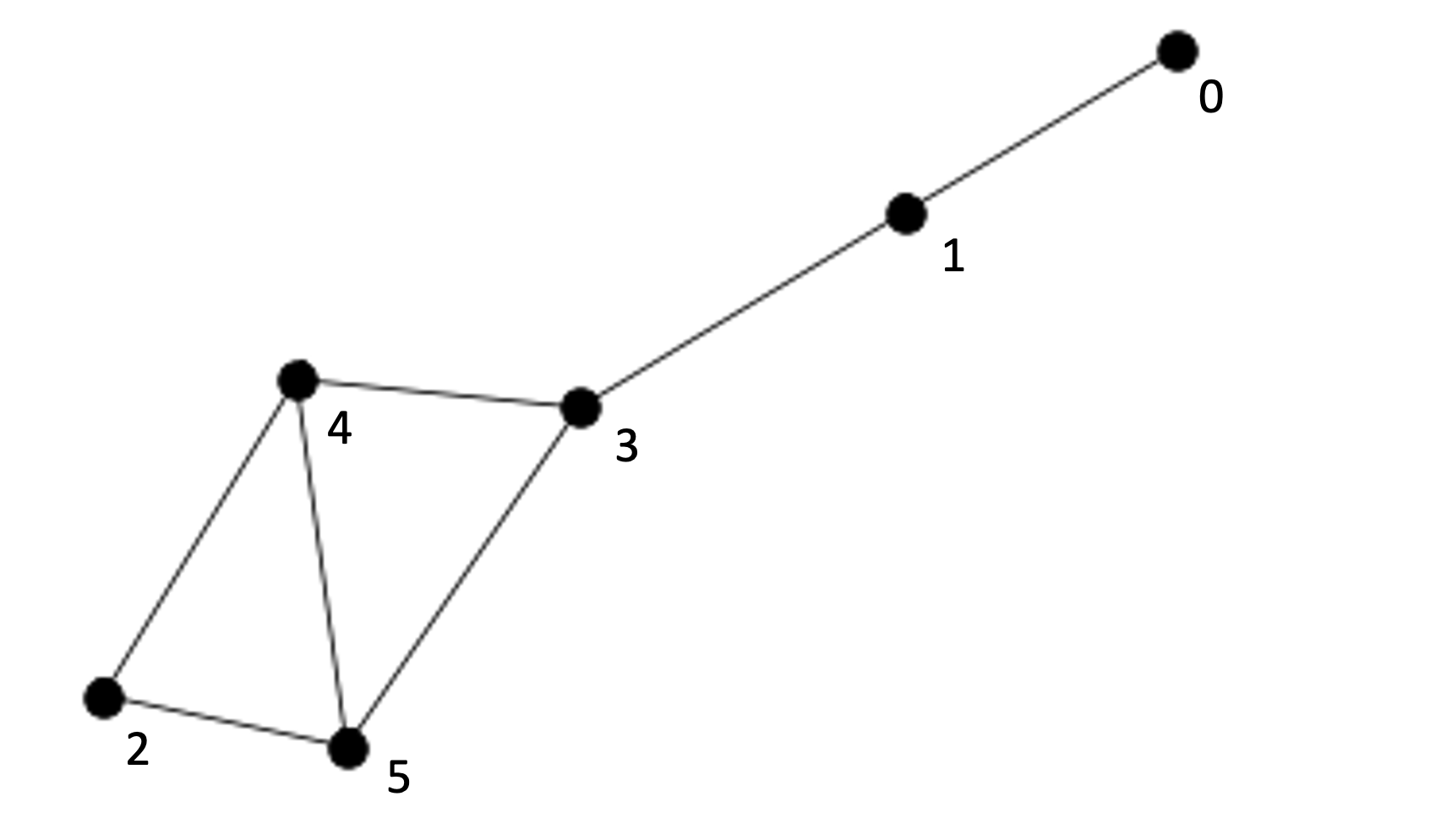}
    \caption{Test graph for dense subgraph code.}
    \label{fig:randomDSI}
\end{figure}

Taking $k=3$, when $\mathcal{S}=\{2\}$, we observed that, for a fixed number of runs, output samples corresponding  subgraph composed of vertices $2,4,5$ appeared the most number of times in the runs. Similarly, when $\mathcal{S}=\{4\}$, we observed that output samples of the induced subgraphs of vertices ${2,4,5}$ and ${3,4,5}$ appeared most, and with almost equal frequency. By direct inspection, it can be seen that our simulations did indeed manage to identify, for a given $\mathcal{S}$, the densest subgraph(s) of size $k$ which contains $\mathcal{S}$.

For graph isomorphism, our code estimates the Laplacian permanental polynomial of Eq.\ (\ref{eqlaplacian})  randomly chosen points $x$,  and for a user-chosen number  of samples.
As an application, we used this to successfully determine that the graphs  $GA$ and $GB$ shown in Figure\ \ref{fig:outputsPvL} are not isomorphic. The distinction is made by observing that for some value of $x$, the corresponding values of the Laplacian permanental polynomial of $GA$ and $GB$ did not match.
\begin{figure}[h]
     \centering
     \begin{subfigure}[b]{0.25\textwidth}
         \centering
         \includegraphics[width=0.9\textwidth]{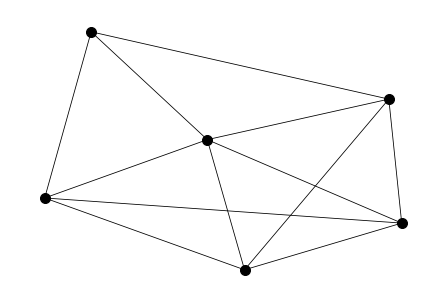} 
         \caption{}
     \end{subfigure}
     \begin{subfigure}[b]{0.25\textwidth}
         \centering
         \includegraphics[width=0.9\textwidth]{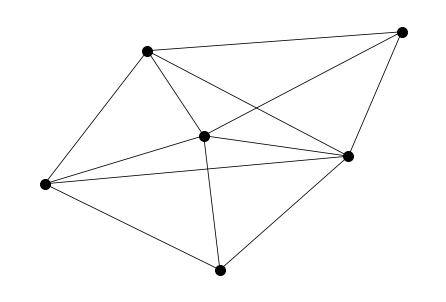}
         \caption{} 
     \end{subfigure}
\caption{Example of two non-isomorphic graphs of 6 vertices, $GA$ (Fig. 3(a)) and $GB$ (Fig. 3(b)). }
\label{fig:outputsPvL}
\end{figure}

As further application of our code and technique, we computed both the Laplacian permanental polynomial ($D_1)$  of Eq.(\ref{eqlappermpoly}) and the permanental polynomial ($D_2$) of Eq.\ (\ref{eqpermpolynomial}) and used these to distinguish non-isomorphic (or identify isomorphic) \emph{trees} \cite{merris1991almost}. We benchmarked the performance of these polynomials with an algorithm from  \cite{SciPyProceedings_11}  ($D_3$) which determines whether or not two graphs are isomorphic. 
We generated  $100$  pairs $(T^1_{i},T^2_{i})$ of random trees with $i \in \{1,..,100\}$  with $5$ vertices each, and used the distinguishers $D_1$, $D_2$, $D_3$ to classify, for each $i$, whether  $T^1_{i}$ is isomorphic (or not) to $T^2_{i}$. We obtained that for 31  pairs generated,  all three distinguishers outputted the same results, for 29 of the pairs only $D_1$ and  $D_3$ had same results, for 18 pairs only $D_2$ and $D_3$  had same results, and for 22 pairs neither $D_1$ nor $D_2$ outputted the same result as $D_3$. Our results agree with the fact that $D_1$ and $D_2$ are known to not be very good distinguishers of non-isomorphic trees \cite{merris1991almost}.

We also tested the performance of the distinguishers $D_1$ and $D_2$ for random graphs. We generated $100$ pairs of random graphs $(G^1_i,G^2_{i})$ of 5 vertices and edge probability $p=0.8$, with $i \in \{1,\dots,100\}$ and used $D_1$, $D_2$, $D_3$ to determine, for each $i$, whether (or not) $G^1_i$ is isomorphic to $G^2_{i}$. For $75$ pairs $D_1 - D_3$ outputted the same results, for 18 pairs only $D_1$ and $D_3$
outputted the same result, for 2 pairs only $D_2$
and $D_3$ outputted the same result, and for 5 pairs neither $D_1$ nor $D_2$ had the same result as $D_3$. This shows that our distinguishers are better at distinguishing random graphs than they are at distinguishing random trees. Finally, our performed tests show that  our distinguishers $D_1$ and $D_2$ have a comparable performance to the benchmark algorithm $D_3$. 

\section{Implementations on the Ascella Quantum Processor}
\label{sec:experiment}
We ran experiments on the cloud-accessible  \emph{Ascella} photonic quantum processor  \cite{ascella} \footnote{Quandela. Quandela cloud, 2022. \url{https://cloud.quandela.com}.}. The processor is composed of a fully-reconfigurable universal $12 \times 12$ linear optical circuit, a bright single-photon source coupled to a programmable optical demultiplexer producing up to 6 single photons, and single-photon detectors. Details about the optical setup as well as the single-photon source characteristics can be found in the supplementary material of \cite{ascella}.

The experiments performed consist of encoding graphs of $n$ vertices with $n \in \{3,4\}$ onto the linear optical circuit by the method of Section \ref{sec:encoding}. For each graph, we estimate the permanent of its adjacency  matrix $A$ using the output statistics of the device. The estimate is computed from $N=10000$ samples each corresponding to an 
event where $n$ photons are detected, of which $n_{post}$ are the post-selected samples corresponding to observing the events where $\mathbf{n_{in}}=\mathbf{n}$ (see Section \ref{sec:encoding}). Our estimate is then computed as
\begin{equation}
\label{eqestpermexp}
\mathsf{E}(\mathsf{Per}(A))=\sigma^n_{max}(A)\sqrt{\frac{n_{post}}{N}}.
\end{equation}

Our results are summarised in Table \ref{table:experiment}, where the exact value of the permanent of the adjacency matrix of each graph is also shown. Error bars are computed for a 95\% confidence interval using Hoeffding's inequality \cite{hoeffding1994probability}. The results show a good overlap between  estimated and exact values. Notably for some graphs tested , the interval $[\mathsf{E}(\mathsf{Per}(A))-\epsilon_{est},\mathsf{E}(\mathsf{Per}(A))+\epsilon_{est}] $ with $\epsilon_{est}$ the error bar does not contain the exact value of $\mathsf{Per}(A)$. This is likely a consequence of processor noise arising through single-photon distinguishability \cite{singphotrev}, multi-photon emissions \cite{singphotrev}, or imperfect  compilation \cite{errormitphot}. For a characterization of these errors for \emph{Ascella}, refer to the supplementary material of \cite{ascella}.

\begin{table}[h]
\centering
  \begin{tabular}{ | c | c | c | c |} \hline
      Graph & Exact Value & Estimated value & $n_{post}$\\ \hline\hline
     \includegraphics[scale=.1]{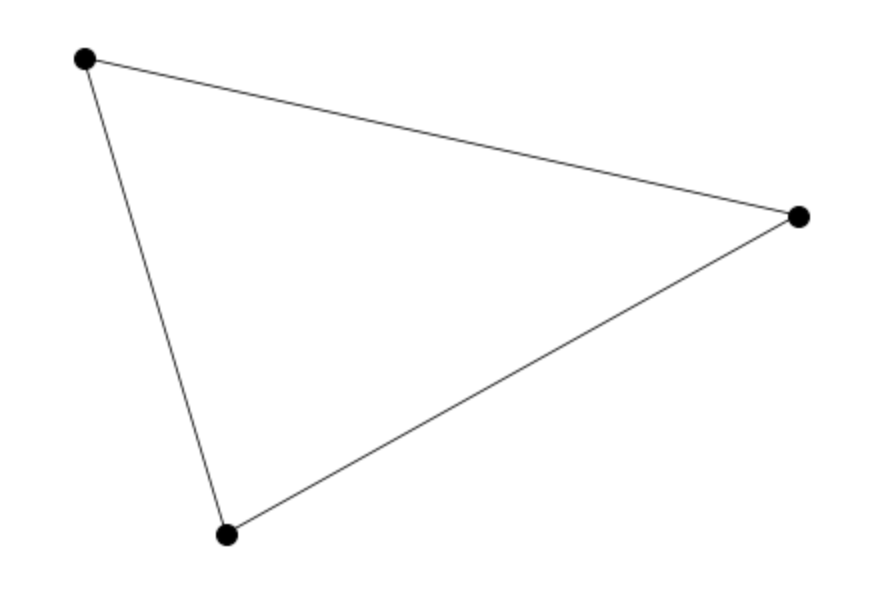} & $2$ & $1.841\pm 0.093$ & $1588$\\ \hline
     \includegraphics[scale=.1]{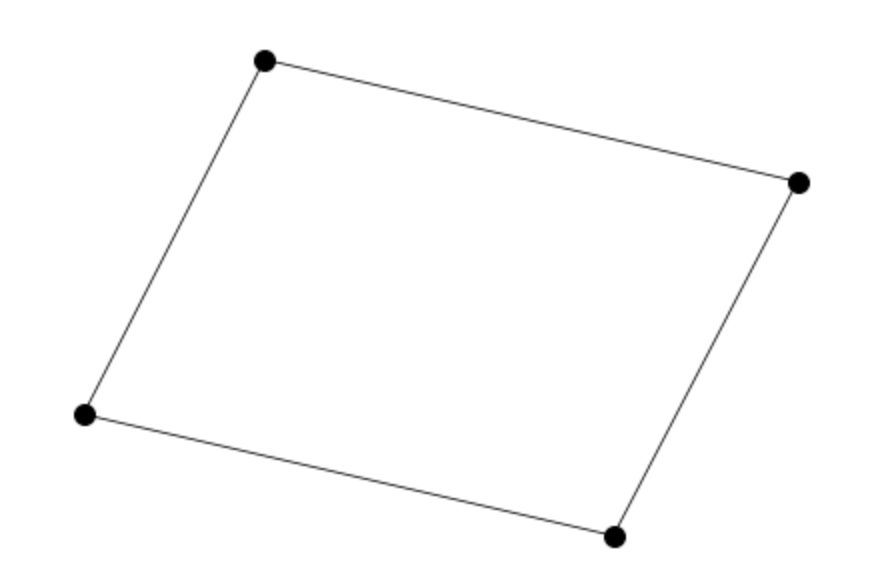} & $4$ & $3.512 \pm 0.186$ & $1927$\\ \hline
     \includegraphics[scale=.1]{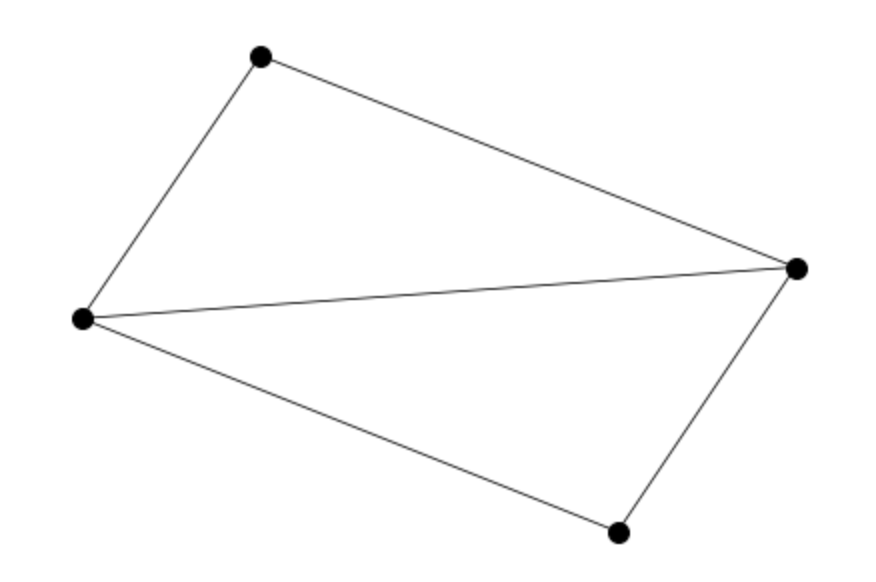} & $4$ & $3.967 \pm 0.500$ & $247$ \\ \hline
     \includegraphics[scale=.1]{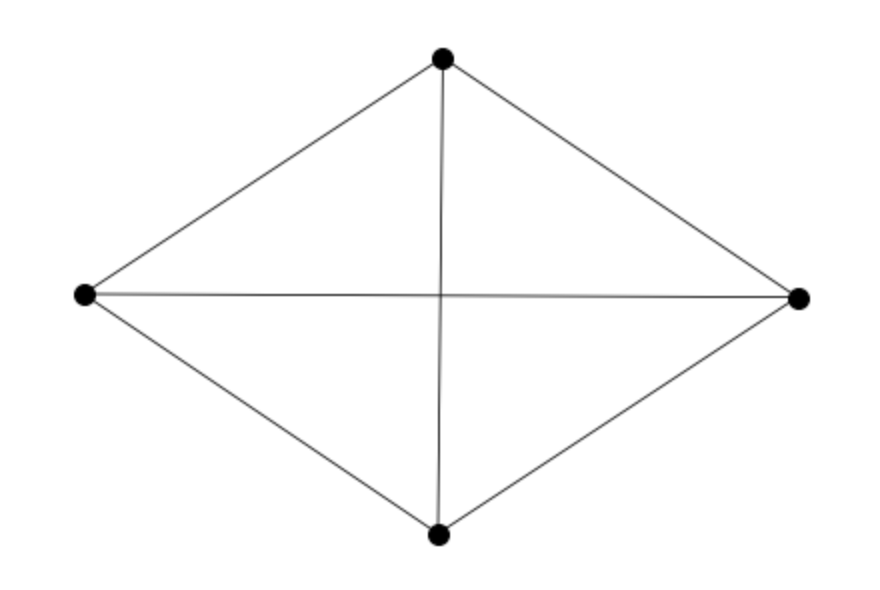} & $9$ & $8.512 \pm 0.941$ & $320$\\ \hline
  \end{tabular}
\caption{Results of  experiments performed on \emph{Ascella} for estimating the permanent of the adjacency matrix of several graphs of 3 and 4 vertices (see main text).}
\label{table:experiment}
\end{table}

\section{Discussion}
In summary, we have shown an efficient method for encoding a bounded $n \times n$ matrix onto a linear optical circuit of  $2n$ modes. We have shown how to use our encoding to solve various graph problems. We  performed numerical simulations validating our techniques. Finally, we performed experiments on photonic quantum hardware cementing the near-term utility of our developed techniques. 

Our work opens up possibilities for practical advantages \cite{coyle2021quantum,gonthier2022measurements}, in the sense that our methods outperform  specific classical strategies for some  instances of a given problem, and up to some (constant) input size. An interesting follow up question would be applying our methods to a specific use-case and highlighting the practical advantage obtained.

One might also ask whether our encoding could be used together with adaptive measurements \cite{chabaud2021quantum} to design new  photonic quantum algorithms  escaping the barrier of efficient classical simulability \cite{gurvits2005complexity}, and thereby presenting the potential for superpolynomial quantum speedups.

An interesting fact about our encoding is that it allows for computation of the permanent of any bounded matrix $A$, and not necessarily a symmetric matrix used for solving graph problems. As such, an interesting question would be identifying further problems whose solution can be linked to matrix permanents.

The unitaries used to encode matrices $A$ are not Haar-random, as can be seen from  Eq.(\ref{eqencUA}) for example. As such, one could hope that these unitaries could be implemented using linear optical quantum circuits of shallower depth than the standard universal interferometers \cite{Reck,Clements}. 
This is desirable in practice, as shallower circuits are naturally more robust to some errors such as photon loss \cite{oszmaniec2018classical,garcia2019simulating}. 

\begin{acknowledgements}
The authors thank Eric Bertasi, Alexia Salavrakos and Enguerrand Monard for contributions to the code; and
Andreas Fyrillas, Alexia Salavrakos, Luka Music, Arno Ricou, Jason Mueller, Pierre-Emmanuel Emeriau, Edouard Ivanov, and Jean Senellart for valuable discussions, comments and feedback.
We are grateful for support from the grant BPI France Concours Innovation PIA3 projects DOS0148634/00 and DOS0148633/00 – Reconfigurable Optical Quantum Computing.
\end{acknowledgements}

\bibliography{apssamp}

\appendix

\section{Notation}
\label{app:not}
We present here some  notation which we will use throughout this appendix.

We will denote by $G(V,E)$ (or sometimes $G$ for simplicity) a graph with a vertex set $V=\{v_1,...,v_n\}$ and edge set $E=\{e_1,...,e_l\}$, with $n,l \in \mathbb{N}^{*}$. The degree of vertex $v_i$ will be denoted as $|v_i|$, and is the number of edges connected to $v_i$. The adjacency matrix corresponding to $G$ will be denoted as $A(G)$ (or sometimes $A$ for simplicity). Unless otherwise 
specified, we will deal with unweighted, undirected, and simple graphs $G$. In these cases, the adjacency matrix $A(G)$ is a symmetric $(0,1)$-matrix \cite{graphbook}. The Laplacian  of a graph $G$  is defined as \cite{merris1981permanental}

\begin{equation}
    \label{eqlaplacian}
    L(G):=D(G)-A(G),
\end{equation}
with $D(G):=\mathsf{diag}(|v_1|,...,|v_n|)$ a diagonal matrix whose $ith$ entry is the degree of vertex $v_i$.

\bigskip
Let $\mathbf{E}_i$ be an $1 \times n$ row vector with zeros everywhere except at entry $i$ which is one. The $n \times n$ identity can then be written as
$$\mathbb{I}_{n \times n}=\begin{pmatrix}
\mathbf{E}_1\\
\mathbf{E}_2\\
.\\
.\\
.\\
\mathbf{E}_n
\end{pmatrix}.$$
Let $\pi:\{1,...,n\} \to \{1,...,n\}$ be a permutation of the set $\{1,...,n\}$, we will denote the \emph{symmetric group} or order $n$ (i.e, the set of all such permutations) as $\mathcal{S}_n$. The permutation matrix corresponding to $\pi$ is defined as
\begin{equation}
\label{eqpermmatrix}
    P_{\pi}:=\begin{pmatrix}
\mathbf{E}_{\pi(1)}\\
\mathbf{E}_{\pi(2)}\\
.\\
.\\
.\\
\mathbf{E}_{\pi(n)}
\end{pmatrix}.
\end{equation}
The set of all such permutation matrices forms a group, which we will denote as $\mathcal{P}_n$ \cite{weisstein2002permutation}.

\bigskip
Let $\mathcal{M}_n(\mathbb{C})$ be the set of $n \times n$ complex matrices, and let $M:=(M_{ij})_{i,j \in \{1,\dots,n\}} \in \mathcal{M}_n(\mathbb{C})$. We will denote by $\norm{M}$ the spectral norm of $M$,  defined as
\begin{equation}
    \label{eqspectralnorm}
    \norm{M}:=\sigma_{max}(M)=\sqrt{\lambda_{max}(M^{\dagger}M)},
\end{equation}
where $\sigma_{max}(M)$ is the largest singular value of $M$, which is equal to the square root of the largest eigenvalue of $M^{\dagger}M$, denoted as $\lambda_{max}(M^{\dagger}M)$; $M^{\dagger}$ denotes the conjugate transpose of $M$. $M^T$ will denote the transpose of $M$. Also, let
\begin{equation}
  \norm{M}_{\infty}:=\mathsf{max}_{i}\sum_{j=1,\dots,n}|M_{ij}|,
\end{equation}
where $\mathsf{max}_i$ denotes the maximum of the above defined sum over all rows $i \in \{1,\dots,n\}$ of $M$. As well as
\begin{equation}
     \norm{M}_{1}:=\mathsf{max}_{j}\sum_{i=1,\dots,n}|M_{ij}|,
\end{equation}
where $\mathsf{max}_j$ denotes the maximum of the above defined sum over all columns $j \in \{1,\dots,n\}$ of $M$.

\section{Detailed comparision with previous work}
\label{app:comp}
The main differences between our encoding and that of \cite{bradler1,bradler2,arrazola,schuld}, which in general also encodes a (real symmetric)  $n \times n$ matrix $A$ into a photonic setup with $2n$ modes, are $(1)$ our encoding directly embeds $A$  into a linear optical circuit, whereas the encoding in \cite{bradler1,bradler2,arrazola,schuld} encodes $A$ by using a combination of squeezed  states of light, as well as linear optical circuits; and $(2)$ our encoding works also for general non-symmetric bounded matrices $A$, whereas that of \cite{bradler1,bradler2,arrazola,schuld} supports only symmetric matrices $A$. Of course, there are ways to construct, starting from non-symmetric $A$, a larger  matrix $\mathcal{A}$  which is symmetric \footnote{$\mathcal{A}=\begin{pmatrix}
0_{n \times n} & A \\ A^T & 0_{n \times n}
\end{pmatrix}$, is an example of such a construction.}, then encoding $\mathcal{A}$ using techniques in \cite{bradler1,bradler2,arrazola,schuld}. However,  this requires using a photonic setup of $L>2n$ modes, and it is unclear whether the number of modes could be reduced back to $2n$ in this setting. Finally, $(3)$ our photonic setup composed of single-photon sources, linear optical circuits, and single-photon detectors, when used together with our encoding naturally allows the computation of the permanent of a matrix, whereas the setup in \cite{bradler1,bradler2,arrazola,schuld} computes the Hafnian ($\mathsf{Haf}(.)$) of a matrix \cite{rudelson2016hafnians, caianiello1953quantum}. Although the Hafnian is in some sense a generalization of the permanent, since
\begin{equation}
    \label{eqhafnian}
    \mathsf{Haf}\begin{pmatrix} 0_{n \times n} & A \\ A^T & 0_{n \times n}  \end{pmatrix}=\mathsf{Per}(A),
\end{equation}
where  $0_{n \times n}$ is the all-zeros $n \times n$ matrix.
 Nevertheless, Eq.(\ref{eqhafnian}) highlights the fact that, using the setup in \cite{bradler1,bradler2,arrazola,schuld} together with their encoding to compute $\mathsf{Per}(A)$, requires in general a number of modes exactly double that needed to compute $\mathsf{Per}(A)$ using our setup. To explain this point further, first note that, although $\mathcal{A}=\begin{pmatrix} 0_{n \times n} & A \\ A^T & 0_{n \times n}  \end{pmatrix}$ is symmetric, it does not satisfy the criteria of encodability onto a Gaussian state mentioned in \cite{bradler1}, since the off-diagonal blocks need to be equal as well as positive definite. Thus, one needs to use $\mathcal{A}\bigoplus\mathcal{A}$ which maps onto a Gaussian covariance matrix \cite{bradler1}, but this is a $4n \times 4n$  matrix. Finally, note that input states other than squeezed states,  such as thermal states, have been used in Gaussian Boson Sampling for encoding and computing the permanent of positive definite matrices \cite{jahangiri2020point}.
 \section{Computing permanental polynomials}
\label{app:permpoly}
In order  to compute the coefficients $\{c_i\}$ in Eq.(\ref{eqpermpolynomial}), we perform $n+1$ experiments, where in each experiment we encode $B_{x}$ into a linear optical circuit, and then estimate $\mathsf{Per}(B_{x})$. For each experiment $i$,
 we choose a different value $x_i$ of $x$, for $i$ going from 1 to $n+1$. By doing this, we obtain the following system of $n+1$ linear equations in $n+1$ unknowns $c_0,\dots, c_n$ 
 \begin{equation}
     \label{eqlinearsystem}
     \begin{pmatrix}
      1 & x_1 & \dots & x^n_{1} \\
      1 & x_2  & \dots & x^n_{2}
      \\ & \\       &    \\
      .  & .  & \dots &  . \\
      \\    \\
      .  & .   & \dots&. \\
      \\    \\
      .\\ &  .    & \dots&. \\
      1 & x_{n+1} & \dots & x^n_{n+1}
     \end{pmatrix} \begin{pmatrix} c_0 \\ c_1  \\   . \\.  \\.  \\  c_n \end{pmatrix}= \begin{pmatrix} P_A(x_1) \\ P_A(x_2) \\. \\. \\. \\ P_A(x_{n+1}) \end{pmatrix}.
 \end{equation}
 Let 
 \begin{equation*}
 D(x_1,...,x_{n+1}):=\begin{pmatrix}
      1 & x_1 & \dots & x^n_{1} \\
      1 & x_2  & \dots & x^n_{2}
      \\ & \\       &    \\
      .  & .  & \dots &  . \\
      \\    \\
      .  & .   & \dots&. \\
      \\    \\
      .\\ &  .    & \dots&. \\
      1 & x_{n+1} & \dots & x^n_{n+1} \end{pmatrix}.
 \end{equation*}
 The determinant 
 \begin{equation}
     \label{eqdet}
     f(x_1,...,x_{n+1}):=\mathsf{Det}(D(x_1,...,x_{n+1})),
 \end{equation}
 is a polynomial of $(x_1,...,x_{n+1}) \in \mathbb{R}^{n+1}$ which is non-identically zero, thus we can make use of the following lemma proven in \cite{okamoto1973distinctness}.
 \begin{lemma} 
 \label{lemnonidzero}
 Let $f(x_1,...,x_{n+1})$ be a polynomial of real variables $(x_1,...,x_{n+1}) \in \mathbb{R}^{n+1}$ which is non-identically zero. Then the set $\{(x_1,...,x_{n+1}) \mid f(x_1,...,x_{n+1})=0\}$ has Lebesgue measure zero in $\mathbb{R}^{n+1} $.
 \end{lemma}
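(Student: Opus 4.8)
The plan is to prove this by induction on the number of variables $N := n+1$, combining the elementary one-variable fact that a non-identically-zero polynomial has only finitely many roots with the Tonelli theorem on iterated integrals. Since the statement is classical one could equally just invoke \cite{okamoto1973distinctness}, but the argument is short enough to include.

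For the base case $N = 1$, a non-identically-zero polynomial $f(x_1)$ has at most $\deg f$ roots, so its zero set is finite and hence $\lambda_1$-null in $\mathbb{R}$. For the inductive step, I would assume the claim for polynomials in $N-1$ variables and take $f(x_1,\dots,x_N) \not\equiv 0$. First I would view $f$ as a polynomial in $x_N$ alone, writing $f = \sum_{j=0}^{d} a_j(x_1,\dots,x_{N-1})\, x_N^{\,j}$ with each $a_j$ a polynomial in the first $N-1$ variables. Because $f \not\equiv 0$, some coefficient $a_{j_0}$ is non-identically zero, so by the inductive hypothesis its zero locus $Z \subset \mathbb{R}^{N-1}$ is Lebesgue-null. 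Then for every fixed $\mathbf{x}' := (x_1,\dots,x_{N-1}) \notin Z$ the univariate polynomial $x_N \mapsto f(\mathbf{x}', x_N)$ is non-identically zero (its $x_N^{\,j_0}$-coefficient is $a_{j_0}(\mathbf{x}') \neq 0$), hence has finitely many roots, so the section $\{x_N : f(\mathbf{x}', x_N) = 0\}$ is null in $\mathbb{R}$.

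To conclude, I would observe that the zero set $\mathcal{N} := f^{-1}(\{0\})$ is closed, hence Borel, so Tonelli's theorem applies and gives
\begin{equation*}
\lambda_N(\mathcal{N}) \;=\; \int_{\mathbb{R}^{N-1}} \lambda_1\!\left(\{x_N : (\mathbf{x}', x_N) \in \mathcal{N}\}\right) d\lambda_{N-1}(\mathbf{x}').
\end{equation*}
By the previous step the integrand vanishes for every $\mathbf{x}' \notin Z$, that is, $\lambda_{N-1}$-almost everywhere, so the integral is $0$ and $\mathcal{N}$ is Lebesgue-null, closing the induction.

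There is no genuinely hard step here. The only points needing a little care are (i) verifying that the distinguished coefficient $a_{j_0}$ stays nonzero off $Z$, so that each relevant section is a bona fide nonzero one-variable polynomial rather than the zero polynomial; and (ii) justifying the use of Tonelli, which is immediate once one notes $\mathcal{N}$ is closed, so that $\mathcal{N}$ and its sections are automatically measurable. If one wished to avoid measure theory altogether, one could instead exhaust $\mathbb{R}^N$ by cubes and run the same slicing estimate by hand, but the Tonelli route is the cleanest.
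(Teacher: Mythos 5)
Your proof is correct, but note that the paper itself does not prove this lemma at all: it simply cites \cite{okamoto1973distinctness} and moves on. What you have written is the standard self-contained argument for that classical fact, namely induction on the number of variables combined with Tonelli's theorem, and it is sound as stated. The two points you flag as needing care are handled properly: isolating a coefficient $a_{j_0}\not\equiv 0$ guarantees that for $\mathbf{x}'\notin Z$ the section $x_N\mapsto f(\mathbf{x}',x_N)$ is a genuine nonzero univariate polynomial of degree at least $j_0$, hence has a finite zero set; and the measurability needed for Tonelli is immediate from the closedness of $f^{-1}(\{0\})$. The only substantive difference between your route and the paper's is that you supply the proof the paper delegates to a reference; what your version buys is self-containedness (no reliance on the external result), at the cost of a paragraph of measure theory that the paper chose to avoid. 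Either is acceptable; if anything, your explicit argument makes the subsequent claim in the appendix --- that restricting to points with all $x_i\le 0$ still avoids the measure-zero bad set --- easier to verify, since the induction shows the zero locus is a null closed set whose complement is dense in every orthant.
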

 Lemma \ref{lemnonidzero} implies that \emph{almost any} choice of $(x_1,...,x_{n+1})$ gives an invertible matrix $D(x_1,...,x_{n+1})$, since its determinant is non-zero for almost any choice (except a set of measure zero) of $(x_1,...,x_{n+1})$. This is important, as it allows one to solve the system of linear equations in Eq.(\ref{eqlinearsystem}) with high probability by randomly choosing $n+1$ values of $x$, and thereby determine the coefficients $\{c_i\}$ of the permanental polynomial.

 As a final remark, note that our setup allows estimating $|\mathsf{Per}(B_x)|$, rather than $\mathsf{Per}(B_x)$ needed to solve the system of linear equations. We can however, knowing the sign of $\mathsf{Per}(B_x)$, always deduce it from $|\mathsf{Per}(B_x)|$. Choosing $x \in \mathbb{R}^{-}$, gives
 $\mathsf{Per}(B_x)=(-1)^n\mathsf{Per}(-x\mathbb{I}_{n \times n}+A)$, where $\mathsf{Per}(-x\mathbb{I}_{n \times n}+A) \geq 0$. In this way we can always know the sign of $\mathsf{Per}(B_x)$ beforehand. By lemma \ref{lemnonidzero}, choosing points of the form $(x_1,\dots,x_{n+1})$ with $x_i \leq 0$ allows for solving the system of linear equations, since the set of these points does not have measure zero in $\mathbb{R}^{n+1}$.

\section{$k$-densest subgraph problem}
\label{app:kdensest}
In this section we prove Theorem \ref{thperm} which we restate here for convenience.

%\label{lemm2}
Let $G(V,E)$ be a graph with $|V|=n$, $|E|=I$, with $n,I \in \mathbb{N}^{*}$, and $n, I$ even. Let $A=(a_{ij})_{i,j \in \{1,\dots,n\}}$, with $a_{ij} \in \{0,1\}$ be the adjacency matrix of $G$. Theorem \ref{thperm} states that
\begin{equation*}
\mathsf{Per}(A) \leq f(n,I),
\end{equation*}
where $f(n,I)$ is a monotonically increasing function with increasing $I$, for fixed $n$.

\begin{proof}
Let $r_i=\sum_{j=1 \dots n}a_{ij}.$ Consider the upper bound for $\mathsf{Per}(A)$ for a $(0,1)$-matrix $A$ shown in \cite{Minc,Bergman}
\begin{equation}
\label{eqlemm21}
    \mathsf{Per}(A) \leq \prod_{i=1,\dots n}(r_i !)^{\frac{1}{r_i}}.
\end{equation}
Also, note the following upper bound shown in \cite{Agha} for simple graphs $G$ with even $n$ and $I$
\begin{equation}
    \label{eqlemm22}
    \prod_{i=1,\dots n}(r_i !)^{\frac{1}{2r_i}} \leq \omega(n, I),
\end{equation}
with 
\begin{equation}
    \omega(n,I):= \left( \biggl\lfloor \frac{2 I}{n}  \biggr\rfloor !\right)^{\frac{\frac{n}{2} - \alpha}{\bigl\lfloor \frac{2 I}{n}\bigr\rfloor} } \left( \biggl\lceil \frac{2 I}{n}  \biggr\rceil  ! \right)^{\frac{\alpha}{\bigl\lceil \frac{2 I}{n}\bigr\rceil} },
\end{equation}
with
\begin{equation}
    \alpha:=I-n\biggl\lfloor \frac{I}{n} \biggr\rfloor,
\end{equation}
and $\lceil . \rceil$, $\lfloor . \rfloor$ denoting the ceiling and floor functions respectively.
Taking the square root of Eq.(\ref{eqlemm21}) and plugging it in Eq.(\ref{eqlemm22}), we get
\begin{equation}
\label{eqlemm23}
  \sqrt{\mathsf{Per}(A)} \leq \omega(n, I).  
\end{equation}
Squaring Eq.(\ref{eqlemm23}), then defining $f(n, I):=(\omega(n, I))^2$, while noting that $\omega(n, I)$ (and therefore $f(n, I)$) is monotonically increasing with  increasing $I$ for fixed $n$, as observed in \cite{arrazola}, completes the proof.
\end{proof}

\section{Graph isomorphism}
\label{app:GI}

Let $A$ and $B$ be the adjacency matrices of two (unweighted, undirected, no self loops) graphs $G_1$ and $G_2$ with $n$ vertices each. We will also assume that $G_1$ and $G_2$ are \emph{isospectral}, that is they have the same eigenvalues. Isomorphic graphs are also isospectral, this can be seen by noting that, if $B=P_{\pi}AP^{T}_{\pi}$, then the characteristic polynomials of $A$ and $B$ are equal. That is,
\begin{multline*}
\mathsf{Det}(\lambda \mathbb{I}_{n \times n}-B)=\mathsf{Det}(\lambda \mathbb{I}_{n \times n}-P_{\pi}AP^{T}_{\pi})=\\ \mathsf{Det}(P_{\pi}(\lambda \mathbb{I}_{n \times n}-A)P^T_{\pi})=\\\mathsf{Det}(P_{\pi}P^T_{\pi})\mathsf{Det}(\lambda \mathbb{I}_{n \times n}-A))=\mathsf{Det}(\lambda \mathbb{I}_{n \times n}-A)),
\end{multline*}
since $P_{\pi}P^T_{\pi}=\mathbb{I}_{n \times n}$. The converse however, that isospectral graphs are isomorphic, is not true \cite{beineke1981spectra}. Since determining the eigenvalues of an $n \times n$ matrix takes $O(n^3)$-time \cite{pan1999complexity}, it is good practice to check whether $G_1$ and $G_2$ are isospectral before proceeding to check if they are isomorphic, as there is no point in continuing if they are not isospectral.

We will now prove Theorem \ref{th1} in the main text.

\begin{proof}

\textbf{Proof that 2) $\implies$ 1)}

$G_1$ is isomorphic to $G_2$, then $B=P_{\pi}AP^T_{\pi}$, with $P_{\pi} \in \mathcal{P}_n$. Writing $A$ as
$A=(a_{ij})_{i,j \in \{1,...,n\}}$, we can write $B$ as $B=(b_{ij})_{i,j \in \{1,\dots,n\} }=(a_{\pi(i)\pi(j)})_{i,j \in \{1,\dots,n\}},$ with $\pi: \{1,\dots,n\} \to \{1,\dots,n\}$ the bijection corresponding to $P_{\pi}$. That $B$ can be written this way can be seen directly by noting that $P_{\pi}$ (respectively $P^T_{\pi}$) permutes the rows (respectively columns) of $A$ according to $\pi$.
For $l \in \{1,\dots,n\}$, $\mathbf{s}=\{s_1,\dots,s_l\}$, $\mathbf{t}=\{t_1,\dots,t_l\}$, the submatrix $B_{\mathbf{t},\mathbf{s}}$ is given by
\begin{multline*}
   B_{\mathbf{t},\mathbf{s}}=\begin{pmatrix} b_{s_{1}t_1} & \dots & b_{s_{1}t_l} \\
   .
   \\.
   \\.
   \\
   b_{s_{l}t_1} & \dots & b_{s_{l}t_l}
   \end{pmatrix} = \\ \begin{pmatrix} a_{\pi(s_{1})\pi(t_1)} & \dots & a_{\pi(s_{1})\pi(t_l)} \\
   .
   \\.
   \\.
   \\
   a_{\pi(s_{l})\pi(t_1)} & \dots & a_{\pi(s_{l})\pi(t_l)}
   \end{pmatrix} =A_{\mathbf{\pi(t)},\mathbf{\pi(s)}}.
\end{multline*}
Thus, $\mathsf{Per}(A_{\mathbf{\pi(t)},\mathbf{\pi(s)}})=\mathsf{Per}(B_{\mathbf{t},\mathbf{s}})$, and this holds $\forall$ $l,\mathbf{s}, \mathbf{t}$. Therefore, we recover statement 1).

\textbf{Proof that 1) $\implies$ 2)}
We have that $\forall$ $l,\mathbf{s},\mathbf{t}$, $\mathsf{Per}(B_{s,t})=\mathsf{Per}(A_{\mathbf{\pi(t)},\mathbf{\pi(s)}})$. In particular, consider the case where $\mathbf{s}=\{i,\dots,i\}$, $\mathbf{t}=\{j,\dots,j\}$, with $i,j \in \{1,\dots,n\}$.
We then have
\begin{multline*}
    \mathsf{Per}(B_{s,t})=b^l_{ij}\mathsf{Per} \begin{pmatrix} 1 & 1 \dots & 1
    \\
    1 & 1 \dots & 1
    \\&.
   \\&.
   \\ &. \\
   1 & 1 \dots & 1 \end{pmatrix}=\\\mathsf{Per}(A_{\mathbf{\pi(t)},\mathbf{\pi(s)}})=a^l_{\pi(i)\pi(j)}\mathsf{Per} \begin{pmatrix} 1 & 1 \dots & 1
    \\
    1 & 1 \dots & 1
    \\&.
   \\&.
   \\ &. \\
   1 & 1 \dots & 1 \end{pmatrix}.
\end{multline*}
Thus
\begin{equation*}
    b^l_{ij}=a^l_{\pi(i)\pi(j)},
\end{equation*}
which holds $\forall$ $l$, where $\pi$ is a fixed bijection.
Since $G_1$, $G_2$ are unweighted and undirected, this means that $a_{\pi(i)\pi(j)},b_{ij} \in \{0,1\}$, and therefore that
\begin{equation*}
    b_{ij}=a_{{\pi(i)\pi(j)}},
\end{equation*}
which holds $\forall$ $l,i,j \in \{1,\dots,n\}$, and where $\pi$ is fixed.

Therefore, we can deduce that $B=(a_{\pi(i)\pi(j)})_{i,j \in \{1,\dots,n\}}=P_{\pi}AP^T_{\pi}$. We have thus recovered statement 2).

This completes the proof of Theorem \ref{th1}.

\end{proof}

As already mentioned in the main text, and made concrete through Theorem \ref{th1},  we have shown that our setup provides necessary and sufficient conditions for two graphs to be isomorphic. However, the number of experiments we need to perform  scales exponentially with the number of vertices of the graphs (see main text).   To get around this, we can instead choose to compute Laplacian permanental polynomials (Eq.(\ref{eqlappermpoly})), which are powerful distinguishers on non-isomorphic graphs \cite{merris1981permanental}.  
We now prove the following lemma, which is probably found in the literature, showing that isomorphic graphs have the same Laplacian permanental polynomials.
\begin{lemma}
\label{eqlemlapiso}
Let $G_1$ and $G_2$ be two isomorphic graphs with adjacency matrices $A$, $B$, where $B=P_{\pi}AP^T_{\pi}$, with $P_{\pi} \in \mathcal{P}_n$. Let $L(G_1)$ and $L(G_2)$ be the Laplacians of $G_1$ and $G_2$, then $L(G_2)=P_{\pi}L(G_1)P^T_{\pi}$, and furthermore $\mathsf{Per}(x \mathbb{I}_{n \times n}-L(G_1))=\mathsf{Per}(x \mathbb{I}_{n \times n}-L(G_2))$, for all $x \in \mathbb{R}$.
\end{lemma}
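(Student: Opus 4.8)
The plan is to prove the two assertions in sequence, since the second follows easily once the first is established. First I would show that $L(G_2)=P_\pi L(G_1)P_\pi^T$. The key observation is that the degree matrix transforms covariantly under the permutation: since $B=P_\pi A P_\pi^T$ means $b_{ij}=a_{\pi(i)\pi(j)}$, the degree of vertex $i$ in $G_2$ is $|v_i|_{G_2}=\sum_j b_{ij}=\sum_j a_{\pi(i)\pi(j)}=\sum_k a_{\pi(i)k}=|v_{\pi(i)}|_{G_1}$, where the middle step uses that $j\mapsto\pi(j)$ is a bijection of $\{1,\dots,n\}$. Hence $D(G_2)=\mathsf{diag}(|v_1|_{G_2},\dots,|v_n|_{G_2})=P_\pi D(G_1) P_\pi^T$, because conjugating a diagonal matrix by a permutation matrix permutes its diagonal entries. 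Combining with $A(G_2)=P_\pi A(G_1)P_\pi^T$ and the definition $L(G)=D(G)-A(G)$ from Eq.~(\ref{eqlaplacian}), linearity gives $L(G_2)=D(G_2)-A(G_2)=P_\pi\bigl(D(G_1)-A(G_1)\bigr)P_\pi^T=P_\pi L(G_1)P_\pi^T$.

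Next I would establish the equality of the Laplacian permanental polynomials. Fix $x\in\mathbb{R}$. Using the just-proved conjugation relation and the fact that $P_\pi \mathbb{I}_{n\times n} P_\pi^T=\mathbb{I}_{n\times n}$, I would write $x\mathbb{I}_{n\times n}-L(G_2)=P_\pi\bigl(x\mathbb{I}_{n\times n}-L(G_1)\bigr)P_\pi^T$. Then I would invoke the invariance of the permanent under independent row and column permutations: for any matrix $M$ and permutation matrices $P,Q$, $\mathsf{Per}(PMQ)=\mathsf{Per}(M)$, which follows directly from the expansion in Eq.~(\ref{eqpermexpansion}) since left/right multiplication by permutation matrices merely reindexes the sum over $\mathcal{S}_n$. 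Applying this with $P=P_\pi$, $Q=P_\pi^T$, $M=x\mathbb{I}_{n\times n}-L(G_1)$ yields $\mathsf{Per}(x\mathbb{I}_{n\times n}-L(G_2))=\mathsf{Per}(x\mathbb{I}_{n\times n}-L(G_1))$, as desired. Since $x$ was arbitrary, the polynomials $P_L$ of $G_1$ and $G_2$ coincide.

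There is no serious obstacle here; this is a routine verification, and indeed the lemma statement itself notes the result is ``probably found in the literature.'' The only point requiring a little care is the degree transformation step — one must be careful to note that summing $a_{\pi(i)\pi(j)}$ over $j$ is the same as summing $a_{\pi(i)k}$ over $k$ precisely because $\pi$ is a bijection, so no degrees are lost or double-counted — and the fact that the permanent is blind to the distinction between a row permutation and a column permutation (unlike the determinant, no sign factors arise, which if anything makes this \emph{easier} than the analogous determinant statement). Everything else is linearity of $D-A$ and the standard permutation-invariance of the permanent.
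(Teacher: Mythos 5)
Your proposal is correct and follows essentially the same route as the paper's proof: establish $D(G_2)=P_\pi D(G_1)P_\pi^T$ from the correspondence of vertex degrees under $\pi$, conclude $L(G_2)=P_\pi L(G_1)P_\pi^T$ by linearity, and then use $P_\pi\mathbb{I}_{n\times n}P_\pi^T=\mathbb{I}_{n\times n}$ together with the permutation-invariance of the permanent. The only difference is that you spell out the degree-preservation step via the row-sum and bijection argument, which the paper simply asserts.
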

\begin{proof}
$L(G_2)=D(G_2)-B$, with $B=P_{\pi}AP^T_{\pi}$, and $D(G_2)=(d(G_2)_{ii})_{i \in \{1,\dots,n\}}$, with $d(G_2)_{ii}$ degree of vertex $i$ of $G_2$, which is vertex $\pi(i)$ of $G_1$. Thus $D(G_2)=(d(G_1)_{\pi(i)\pi(i)})_{i \in \{1,\dots,n\}}=P_{\pi}D(G_1)P^T_{\pi}$, and consequently,  $L(G_2)=P_{\pi}L(G_1)P^T_{\pi}$. Using this, we have that 
\begin{multline*}
   \mathsf{Per}(x \mathbb{I}_{n \times n} -L(G_2))=\mathsf{Per}(x \mathbb{I}_{n \times n} -P_{\pi}L(G_1)P^T_{\pi})=\\ \mathsf{Per}(P_{\pi}(x \mathbb{I}_{n \times n} -L(G_1))P^T_{\pi})=\\ \mathsf{Per}(x \mathbb{I}_{n \times n} -L(G_1)),
\end{multline*}
where the last equality holds from the fact that the permanent is invariant under permutations \cite{botta1967linear}. This concludes the proof.
\end{proof}

Computing the coefficients of Laplacian permanental polynomials can be done using our setup, in a similar way to how these coefficients are computed for permanental polynomials, as seen in Section \ref{app:permpoly}. Indeed, replacing $B_x=x \mathbb{I}_{n \times n}-A$  in Section \ref{app:permpoly}, with $B_x=x \mathbb{I}_{n \times n}-L(G)$, then following the same steps as in Section  \ref{app:permpoly} allows one to compute the coefficients of the Laplacian permanental polynomial.

\section{Boosting output probabilities}
\label{app:boosting}
\subsection*{First method for boosting}
Consider the matrix $$A=\begin{pmatrix}
    \mathbf{A}_1 \\ .\\.\\.\\ \\ \mathbf{A}_n
    \end{pmatrix},$$ with $\mathbf{A}_i=(a_{i1},...,a_{in})$ the $ith$ row vector of $A \in \mathcal{M}_n(\mathbb{R})$. We will first discuss the method where we attempt to boost the probability of appearance of the output corresponding to $\mathsf{Per}(A)$ in our setup  by modifying $A$ as follows.
    Let 
    \begin{equation}
        %\label{eqAw}
        A_w=\begin{pmatrix}
    \mathbf{A}_1 \\ .\\.\\.\\ \\ \mathbf{A}_{c-1} \\ w\mathbf{A}_c\\ \mathbf{A}_{c+1} \\ .\\ . \\.\\  \\ \mathbf{A}_n
    \end{pmatrix},
    \end{equation}
    where the $c$th row of $A$ is multiplied by $w \in \mathbb{R}^{+*}$. 
    We first prove the following lemma.
    \begin{lemma}
       \label{lemboosting1}
       $\mathsf{Per}(A_w)=w\mathsf{Per}(A)$.
    \end{lemma}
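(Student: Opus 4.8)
The plan is to use the Leibniz-type expansion of the permanent stated in Eq.(\ref{eqpermexpansion}) and simply track how scaling one row affects each term. First I would write, using $\mathcal{S}_n$ the symmetric group on $\{1,\dots,n\}$ and denoting the entries of $A_w$ by $(A_w)_{ij}$,
\begin{equation*}
\mathsf{Per}(A_w)=\sum_{\pi \in \mathcal{S}_n}\prod_{i=1}^{n}(A_w)_{i\pi(i)}.
\end{equation*}
Since $A_w$ agrees with $A$ in every row $i \neq c$ and has $(A_w)_{cj}=w\,a_{cj}$ in row $c$, each product over $i$ contains exactly one factor coming from row $c$, namely $(A_w)_{c\pi(c)}=w\,a_{c\pi(c)}$, while all other $n-1$ factors coincide with those of $A$.

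The key step is then to factor the constant $w$ out of every term in the sum. Concretely,
\begin{equation*}
\prod_{i=1}^{n}(A_w)_{i\pi(i)} = \Big(\prod_{i\neq c}a_{i\pi(i)}\Big)\cdot w\,a_{c\pi(c)} = w\prod_{i=1}^{n}a_{i\pi(i)},
\end{equation*}
for each fixed $\pi \in \mathcal{S}_n$. Summing over all $\pi$ and pulling the common factor $w$ out of the (finite) sum gives
\begin{equation*}
\mathsf{Per}(A_w)=\sum_{\pi \in \mathcal{S}_n} w\prod_{i=1}^{n}a_{i\pi(i)} = w\sum_{\pi \in \mathcal{S}_n}\prod_{i=1}^{n}a_{i\pi(i)} = w\,\mathsf{Per}(A),
\end{equation*}
which is the claimed identity.

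This is an entirely routine computation, so there is no real obstacle; the only thing to be careful about is the bookkeeping observation that, by definition of the permanent, each monomial picks up exactly one entry from row $c$ and hence exactly one factor of $w$ — so the scaling is linear (degree one in $w$) rather than, say, appearing to some higher power. I would also remark that this is just the multilinearity of the permanent in its rows, the permanent-analogue of the corresponding elementary property of the determinant, and that the same argument works verbatim for scaling a column instead of a row.
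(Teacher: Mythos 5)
Your proof is correct and follows essentially the same route as the paper's: both expand $\mathsf{Per}(A_w)$ via Eq.(\ref{eqpermexpansion}), observe that each monomial contains exactly one entry from row $c$ and hence exactly one factor of $w$, and pull the common factor out of the sum. Nothing further is needed.
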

    \begin{proof}
       Let $A=(a_{ij})_{i,j \in \{1,\dots,n\}}$, $A_w=(b_{ij})_{i,j \in \{1,\dots,n\}}$. Looking at Eq.(\ref{eqpermexpansion}) for $\mathsf{Per}(A_w)$, an element the $c$th row appears exactly once in each product $\prod_{i=1,\dots,n}b_{ i\pi(i)}$ in the sum. Since $b_{c\pi(c)}=w a_{c \pi(c)}$ Thus, $\prod_{i=1,\dots,n}b_{i\pi(i)}=w\prod_{i=1,\dots,n}a_{i\pi(i)}$. Thus,
       $\sum_{\pi \in \mathcal{S}_n}\prod_{i=1,\dots,n}b_{i\pi(i)}=w\sum_{\pi \in \mathcal{S}_n}\prod_{i=1,\dots,n}a_{i\pi(i)}$, which completes the proof.
    \end{proof}
   % \begin{le
   Lemma \ref{lemboosting1} allows one to efficiently compute an estimate of $\mathsf{Per}(A)$, given an estimate of $\mathsf{Per}(A_w)$.

   Let $p(\mathbf{n}|\mathbf{n_{in}})$ (respectively $p_w(\mathbf{n}|\mathbf{n_{in}})$) be the probabilities of observing outcomes corresponding to $\mathsf{Per}(A)$ (respectively $\mathsf{Per}(A_w)$) in our setup. 
  Boosting happens when 
  \begin{equation}
  \label{eqboosting2}
   p_w(\mathbf{n}|\mathbf{n_{in}})>  p(\mathbf{n}|\mathbf{n_{in}}).
   \end{equation}
   We will now show the following
   \begin{lemma}
      \label{lem2boosting}
     $ p_w(\mathbf{n}|\mathbf{n_{in}})>  p(\mathbf{n}|\mathbf{n_{in}})$ $\implies$
      \begin{equation}
      \label{eqboosting3}
      \frac{\sigma_{max}(A)}{\sigma_{max}(A_w)} > \frac{1}{w^{\frac{1}{n}}}.
      \end{equation}
   \end{lemma}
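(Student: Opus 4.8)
The plan is to unwind the definitions of the two probabilities and compare them directly. By Eq.~(\ref{eqestimateperm}) applied to the encoded matrices $A_s = A/\sigma_{max}(A)$ and $(A_w)_s = A_w/\sigma_{max}(A_w)$, together with the input/output choice of Eq.~(\ref{eqoutputinputchoice}), we have $p(\mathbf{n}|\mathbf{n_{in}}) = |\mathsf{Per}(A_s)|^2 = |\mathsf{Per}(A)|^2/\sigma_{max}(A)^{2n}$ and likewise $p_w(\mathbf{n}|\mathbf{n_{in}}) = |\mathsf{Per}(A_w)|^2/\sigma_{max}(A_w)^{2n}$. So first I would write both probabilities in this closed form.

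Next I would substitute the relation from Lemma~\ref{lemboosting1}, namely $\mathsf{Per}(A_w) = w\,\mathsf{Per}(A)$, into the expression for $p_w$. This gives $p_w(\mathbf{n}|\mathbf{n_{in}}) = w^2 |\mathsf{Per}(A)|^2/\sigma_{max}(A_w)^{2n}$. Then the boosting inequality $p_w(\mathbf{n}|\mathbf{n_{in}}) > p(\mathbf{n}|\mathbf{n_{in}})$ becomes, after cancelling the common factor $|\mathsf{Per}(A)|^2$ (which one may assume nonzero, since otherwise there is nothing to boost and the post-selected outcome never occurs),
\begin{equation*}
\frac{w^2}{\sigma_{max}(A_w)^{2n}} > \frac{1}{\sigma_{max}(A)^{2n}}.
\end{equation*}
Rearranging yields $w^2 \sigma_{max}(A)^{2n} > \sigma_{max}(A_w)^{2n}$, i.e. $\left(\sigma_{max}(A)/\sigma_{max}(A_w)\right)^{2n} > 1/w^2$, and taking the positive $2n$-th root of both sides (valid since all quantities are positive) gives exactly Eq.~(\ref{eqboosting3}). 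This is a one-directional implication, so I only need this chain of algebraic manipulations; I do not need to worry about whether the converse holds.

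I do not anticipate a genuine obstacle here: the argument is a short algebraic deduction once the two probabilities are written out via Eq.~(\ref{eqestimateperm}) and Lemma~\ref{lemboosting1}. The only minor subtlety worth a remark is the division by $|\mathsf{Per}(A)|^2$, which requires $\mathsf{Per}(A) \neq 0$; but if $\mathsf{Per}(A) = 0$ then both $p$ and $p_w$ vanish and the hypothesis $p_w > p$ is vacuous, so the implication holds trivially. A second point to note for rigor is that $\sigma_{max}(A_w) > 0$: this holds whenever $A \neq 0$ (equivalently whenever $w\mathbf{A}_c$ or some other row is nonzero), which is again the only case of interest, so all the divisions and roots are well-defined.
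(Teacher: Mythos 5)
Your proposal is correct and follows essentially the same route as the paper's proof: write both probabilities via Eq.~(\ref{eqestimateperm}) with the rescalings $s=\sigma_{max}(A)$ and $s=\sigma_{max}(A_w)$, substitute $\mathsf{Per}(A_w)=w\,\mathsf{Per}(A)$ from Lemma~\ref{lemboosting1}, cancel $|\mathsf{Per}(A)|^2$ (the paper likewise assumes $\mathsf{Per}(A)\neq 0$), and take the $2n$-th root. Your extra remark that the case $\mathsf{Per}(A)=0$ makes the hypothesis vacuous is a small but welcome addition to what the paper states.
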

   \begin{proof}
   Plugging Eq.(\ref{eqestimateperm}) in Eq.(\ref{eqboosting2}), while choosing $s=\sigma_{max}(A)$ (respectively $\sigma_{max}(A_w)$) for $p(\mathbf{n}|\mathbf{n_{in}})$ (respectively $p_w(\mathbf{n}|\mathbf{n_{in}})$)
   , and using Lemma \ref{lemboosting1} gives 
  \begin{equation*}
   w^2\frac{|\mathsf{Per}(A)|^2}{\sigma^{2n}_{max}(A_w)}  > \frac{|\mathsf{Per}(A)|^2}{\sigma^{2n}_{max}(A)}.
  \end{equation*}
   Assuming $\mathsf{Per}(A) \neq 0$, allows for removing it from both sides of the above equation. Regrouping the terms in the above equation, and taking it to the $(.)^{\frac{1}{2n}}$ power completes the proof.
   \end{proof}
   Although lemma \ref{lem2boosting} gives a necessary condition for boosting to occur, it is not very informative as it does not answer the question: what properties should $A$  verify for boosting to be possible under our above defined modification ? It will be the aim of the rest of this section to dig deeper in an attempt to answer the above question.
   
   Let
   
   \begin{equation}
       \label{eqboostinggamma}
       \gamma:=\sum_{j}|a_{cj}|,
   \end{equation}
   with $a_{cj}$ the element of the $c$th row and $jth$ column of $A$.
   Note that
   \begin{equation*}
       \gamma \leq \norm{A}_{\infty},
   \end{equation*}
   and also that
   \begin{equation}
   \label{eqrelsigmaainfty}
   \sigma_{max}(A) \leq \sqrt{n}\norm{A}_{\infty},
   \end{equation}
   where this last equation follows immediately from the well known relation \cite{golub2013matrix}
   \begin{equation}
   \label{eqidtyainftya}
      \norm{A} \leq \sqrt{n} \norm{A}_{\infty}.
   \end{equation}
   From the definition of $\gamma$, $\norm{.}_{\infty}$, and $A_w$ , we also have
   \begin{equation}
       \label{eqinftaw}
       \norm{A_w}_{\infty}=\mathsf{max}(w\gamma,\norm{A}_{\infty}).
   \end{equation}
   Finally, recall the following relation for the trace (denoted as $\mathsf{Trace}(.)$) of matrices $L=(l_{ij})_{i,j \in \{1,\dots,n\}}$ and $M=(m_{ij})_{i,j \in \{1,\dots,n\}}$
   \begin{equation}
       \label{eqtrab}
       \mathsf{Trace}(LM)=\sum_{i=1,\dots,n}\sum_{j=1,\dots,n}l_{ij}m_{ji}.
   \end{equation}
   
   With the above equations in hand we will now prove the following.
   \begin{lemma}
      \label{lemboundsA}
      $ \sqrt{\frac{\sum_{i}\sigma^2_i(A)}{n}}\leq \sigma_{max}(A) \leq \sqrt{n}\norm{A}_{\infty}$, with $\{\sigma_{i}(A)\}_{i \in \{1,\dots,n\}}$ the singular values \footnote{Note that $\sigma_{max}(A)=\sigma_k(A)$ for some $k \in \{1,\dots,n\}$.} of $A$ .
   \end{lemma}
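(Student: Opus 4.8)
The plan is to establish the two inequalities in Lemma \ref{lemboundsA} separately. The upper bound $\sigma_{max}(A) \leq \sqrt{n}\,\norm{A}_{\infty}$ is already recorded above as Eq.(\ref{eqrelsigmaainfty}), which follows directly from the standard norm inequality $\norm{A} \leq \sqrt{n}\,\norm{A}_{\infty}$ of Eq.(\ref{eqidtyainftya}), so nothing further is needed there. The substance of the lemma is the lower bound $\sqrt{\frac{\sum_i \sigma_i^2(A)}{n}} \leq \sigma_{max}(A)$.

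For the lower bound, the key observation is that $\sum_{i} \sigma_i^2(A) = \mathsf{Trace}(A^\dagger A)$. This is exactly why Eq.(\ref{eqtrab}) was stated just before the lemma: writing $\mathsf{Trace}(A^\dagger A) = \sum_{i,j} |a_{ij}|^2$ expresses it as the Frobenius norm squared, but more usefully one invokes the singular value decomposition $A = W\Sigma V^\dagger$ from Eq.(\ref{eqsvd}) and the cyclicity of the trace to get $\mathsf{Trace}(A^\dagger A) = \mathsf{Trace}(V\Sigma^2 V^\dagger) = \mathsf{Trace}(\Sigma^2) = \sum_i \sigma_i^2(A)$. Then, since each $\sigma_i(A) \leq \sigma_{max}(A)$, we have $\sum_i \sigma_i^2(A) \leq n\,\sigma_{max}^2(A)$; dividing by $n$ and taking square roots (all quantities nonnegative) yields the claimed bound. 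Chaining with Eq.(\ref{eqrelsigmaainfty}) completes the two-sided estimate.

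Concretely, the steps in order would be: (i) recall $\sigma_{max}(A) \leq \sqrt{n}\,\norm{A}_{\infty}$ from Eq.(\ref{eqrelsigmaainfty}); (ii) compute $\mathsf{Trace}(A^\dagger A) = \sum_i \sigma_i^2(A)$ via the SVD and trace cyclicity (or equivalently via Eq.(\ref{eqtrab}) applied to $L = A^\dagger$, $M = A$, identifying the result with $\sum_i \sigma_i^2(A)$); (iii) bound $\sum_i \sigma_i^2(A) \leq n\,\sigma_{max}^2(A)$ using $\sigma_i(A) \leq \sigma_{max}(A)$ for all $i$; (iv) rearrange to get $\sqrt{\tfrac{1}{n}\sum_i \sigma_i^2(A)} \leq \sigma_{max}(A)$; (v) combine (i) and (iv).

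I do not expect any serious obstacle here — this is a routine fact about singular values, and every ingredient is already in place in the excerpt. The only mild subtlety is bookkeeping: making sure the trace identity $\mathsf{Trace}(A^\dagger A) = \sum_i \sigma_i^2(A)$ is stated cleanly (it is the one place where a reader might want a line of justification via the SVD), and being careful that all square roots are of nonnegative reals so the monotonicity of $\sqrt{\cdot}$ applies. The purpose of the lemma in the larger argument is presumably to pin down the growth of $\sigma_{max}(A_w)$ with $w$ (leading to $\sigma_{max}(A_w) \approx O(w)$ for large $w$, as claimed in the main text via Lemma \ref{lemboundsAw}), so the two-sided form with explicit $\norm{\cdot}_\infty$ and $\sum_i \sigma_i^2$ control is exactly what is needed downstream.
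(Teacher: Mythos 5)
Your proposal is correct and follows essentially the same route as the paper: the upper bound is quoted from Eq.(\ref{eqrelsigmaainfty}), and the lower bound comes from $\mathsf{Trace}(AA^T)=\sum_i\sigma_i^2(A)$ together with $\sigma_i(A)\leq\sigma_{max}(A)$, rearranged and square-rooted. Your extra line justifying the trace identity via the SVD and cyclicity is a harmless elaboration of what the paper states as following "from the definition of singular values."
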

   \begin{proof}
   The upper bound on $\sigma_{max}(A)$ follows immediately from Eq.(\ref{eqrelsigmaainfty}). For the lower bound, we begin by noting, from the definition of singular values of $A$, that
   \begin{equation*}
       \mathsf{Trace}(AA^T)=\sum_{i=1,\dots,n}\sigma^2_{i}(A).
   \end{equation*}
   Using the fact that  $\sigma_{i}(A) \leq \sigma_{max}(A)$, $\forall$ $i \in \{1,\dots,n\}$, and plugging this into the above equation gives
   \begin{equation}
   \label{eqfininproof}
     \sum_{i=1,\dots,n}\sigma^2_{i}(A)  \leq n \sigma^2_{max}(A).
   \end{equation}
   Rearranging the terms in Eq.(\ref{eqfininproof}), and taking the square root of it gives the desired lower bound and completes the proof.
   \end{proof}
   For $A_w$, we show that
   \begin{lemma}
      \label{lemboundsAw}
     $\sqrt{\frac{\sum_{i}\sigma^2_{i}(A)+(w^2-1)\delta}{n}} \leq  \sigma_{max}(A_w) \leq \sqrt{n}\mathsf{max}(w\gamma,\norm{A}_{\infty})$, with $\delta:=\sum_{j}a^2_{cj}.$
       \end{lemma}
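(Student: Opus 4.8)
\textbf{Proof plan for Lemma \ref{lemboundsAw}.}

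The plan is to mirror the two-sided argument used for $A$ in Lemma \ref{lemboundsA}, tracking the single modification ($c$th row scaled by $w$) through both the trace identity (for the lower bound) and the $\norm{\cdot}_\infty$ bound (for the upper bound). First I would establish the upper bound: since $A_w$ differs from $A$ only in its $c$th row, where each entry $a_{cj}$ becomes $wa_{cj}$, the row sums of $A_w$ are unchanged except for row $c$, whose absolute row sum is $w\gamma$ by the definition of $\gamma$ in Eq.(\ref{eqboostinggamma}). Hence $\norm{A_w}_\infty = \mathsf{max}(w\gamma,\norm{A}_\infty)$, which is exactly Eq.(\ref{eqinftaw}). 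Combining this with the general inequality $\sigma_{max}(M)\le \sqrt{n}\norm{M}_\infty$ (Eq.(\ref{eqrelsigmaainfty})/Eq.(\ref{eqidtyainftya}) applied to $A_w$) immediately gives $\sigma_{max}(A_w)\le \sqrt{n}\,\mathsf{max}(w\gamma,\norm{A}_\infty)$.

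For the lower bound, I would reuse the observation $\mathsf{Trace}(A_w A_w^T)=\sum_i \sigma_i^2(A_w)$ and the fact that each $\sigma_i(A_w)\le \sigma_{max}(A_w)$, which together give $\sum_i \sigma_i^2(A_w)\le n\,\sigma_{max}^2(A_w)$, hence $\sigma_{max}(A_w)\ge \sqrt{(\sum_i \sigma_i^2(A_w))/n}$, exactly as in the proof of Lemma \ref{lemboundsA}. The remaining task is to compute $\sum_i \sigma_i^2(A_w)=\mathsf{Trace}(A_w A_w^T)$ in terms of quantities attached to $A$. Using the trace formula Eq.(\ref{eqtrab}) with $L=A_w$, $M=A_w^T$, we get $\mathsf{Trace}(A_w A_w^T)=\sum_{i,j}(A_w)_{ij}^2$, i.e. the sum of squares of all entries. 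Every entry outside row $c$ is unchanged, while in row $c$ each $a_{cj}^2$ becomes $w^2 a_{cj}^2$. Therefore $\mathsf{Trace}(A_w A_w^T)=\mathsf{Trace}(AA^T)+(w^2-1)\sum_j a_{cj}^2=\sum_i\sigma_i^2(A)+(w^2-1)\delta$ with $\delta:=\sum_j a_{cj}^2$ as defined in the statement. Substituting this into the lower bound yields $\sigma_{max}(A_w)\ge \sqrt{(\sum_i\sigma_i^2(A)+(w^2-1)\delta)/n}$, completing the proof.

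I do not anticipate a genuine obstacle here: the lemma is essentially a bookkeeping exercise combining the Frobenius-norm/trace identity with the elementary effect of scaling one row, plus the standard $\norm{\cdot}$–$\norm{\cdot}_\infty$ inequality already invoked in the excerpt. The only mild subtlety is making sure the row-sum computation for $\norm{A_w}_\infty$ correctly takes the maximum over \emph{all} rows — so that when $w\gamma$ does not exceed $\norm{A}_\infty$, the $\infty$-norm is still governed by the untouched rows — which is why the bound naturally appears as $\mathsf{max}(w\gamma,\norm{A}_\infty)$ rather than simply $w\gamma$. One should also note in passing (for consistency with the rest of the section) that the lower bound is useful precisely because it grows like $w$ for large $w$, recovering the $\sigma_{max}(A_w)\approx O(w)$ behavior claimed in the main text.
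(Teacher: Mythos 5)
Your proposal is correct and follows essentially the same route as the paper's proof: the upper bound via $\sigma_{max}(A_w)\leq\sqrt{n}\norm{A_w}_{\infty}$ together with $\norm{A_w}_{\infty}=\mathsf{max}(w\gamma,\norm{A}_{\infty})$, and the lower bound via $\mathsf{Trace}(A_wA_w^T)=\sum_i\sigma_i^2(A)+(w^2-1)\delta\leq n\sigma_{max}^2(A_w)$. No gaps.
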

       \begin{proof}
       The upper bound for $\sigma_{max}(A_w)$ follows from plugging Eq.(\ref{eqinftaw})  in the relation $\sigma_{max}(A_w) \leq \sqrt{n}\norm{A_w}_{\infty}$. For the lower bound, denote $A=(a_{ij})$,  $A_w=(b_{ij})$, and consider
       \begin{equation*}
         \mathsf{Trace}(A_wA^T_w)=\sum_{i}\sum_{j}b_{ij}^{2}=\sum_{i \neq c}\sum_{j}a^2_{ij}+w^2\sum_{j}a^2_{cj},
       \end{equation*}
       where the second equality follows from using the relation of Eq.(\ref{eqtrab}), and the third equality follows from noting that $b_{ij}=a_{ij}$ for $i \neq c$, and $b_{cj}=wa_{cj}$. Now,

       \begin{align*}
       \sum_{i \neq c}\sum_{j}a^2_{ij}+w^2\sum_{j}a^2_{cj}&=
       \sum_{i}\sum_{j}a^2_{ij}+(w^2-1)\sum_{j}a^2_{cj}\\
       &=\mathsf{Trace}(AA^T)+(w^2-1)\delta\\
       &=\sum_{i}\sigma^2_i(A)+(w^2-1)\delta.
       \end{align*}

       Thus
       \begin{equation}
           \label{eqexpaw}
           \mathsf{Trace}(A_wA^T_w)=\sum_{i}\sigma^2_i(A)+(w^2-1)\delta.
           \end{equation}
           Noting that
           \begin{equation*}
              \mathsf{Trace}(A_wA^T_w) \leq n \sigma^2_{max}(A_w),
           \end{equation*}
           then plugging this into Eq.(\ref{eqexpaw}), rearranging, and taking the square root, one obtains the desired lower bound for $\sigma_{max}(A_w)$. This completes the proof.
           \end{proof}
           
          Taking $w>1$, and with lemmas \ref{lemboundsA} and \ref{lemboundsAw} in hand, we can make the following observations. First, if
          
          \begin{equation}
             \label{eqcondi1}
             w\gamma < \norm{A}_{\infty},
          \end{equation}
          the upper bounds of $\sigma_{max}(A)$ and $\sigma_{max}(A_w)$ coincide. Furthermore, if
          \begin{equation}
              \label{eqcondi2}
              (w^2-1)\delta \ll \sum_{i}\sigma^2_{i}(A)=\mathsf{Trace}(AA^T),
          \end{equation}
          then the lower bounds of $\sigma_{max}(A)$ and $\sigma_{max}(A_w)$ almost coincide.
          
          Verifying the conditions in equations (\ref{eqcondi1}) and  (\ref{eqcondi2}) for some values of $w$ and $n$ likely implies that $\sigma_{max}(A_w) \approx \sigma_{max}(A)$, and therefore that the condition

         \begin{equation*}
    \frac{\sigma_{max}(A)}{\sigma_{max}(A_w)} > \frac{1}{w^{\frac{1}{n}}},
          \end{equation*}
          is satisfied, which, from lemma \ref{lem2boosting}, is a necessary condition for boosting. Since $\delta$, $\gamma$, $\norm{A}_{\infty}$, and $\mathsf{Trace}(AA^T)$, are properties of $A$ which are easily computable. We have thus established a way to test whether boosting using our technique is possible, given some matrix $A$.
          
          What remains is to find matrices $A$ satisfying the above properties (equations (\ref{eqcondi1}) and (\ref{eqcondi2})) for some $w$, and some values of $n$. One example which we, numerically, find satisfies these properties, and for which we observe boosting is the adjacency matrix of the ten vertex graph 
          \begin{equation}
              \label{eqadjmatrixexboosting}
              A=\begin{pmatrix} 0& 1 & 1 & 1 & 1 & 1 & 1 & 1 & 1 & 0 \\ 1& 0 & 1 & 1 & 1 & 1 & 1 & 1 & 1 & 1 \\ 1& 1 & 0 & 1 & 1 & 1 & 1 & 1 & 1 & 1 \\1& 1 & 1 & 0 & 1 & 1 & 1 & 1 & 1 & 0 \\ 1& 1 & 1 & 1 & 0 & 1 & 1 & 1 & 1 & 0 \\ 1& 1 & 1 & 1 & 1 & 0 & 1 & 1 & 1 & 0 \\ 1& 1 & 1 & 1 & 1 & 1 & 0 & 1 & 1 & 0 \\ 1& 1 & 1 & 1 & 1 & 1 & 1 & 0 & 1 & 0 \\ 1& 1 & 1 & 1 & 1 & 1 & 1 & 1 & 0 & 0 \\ 0&1 & 1 & 0 & 0 & 0 & 0 & 0 & 0 & 0
              \end{pmatrix},
          \end{equation}
          and where we choose $c=10$ when constructing $A_w$ (i.e we multiply the tenth row of $A$ by $w$). 
          The graph corresponding to $A$ is represented in Figure \ref{fig:isolated}. The conditions in equations (\ref{eqcondi1}) and (\ref{eqcondi2}) appear to be satisfied, up to a certain value of $w$, whenever row  $c$ corresponds to a vertex which has a significantly lesser degree than other vertices in the graph, as can be seen in the above chosen example.
          
          \begin{figure}[h]
              \centering
              \includegraphics[scale=0.3]{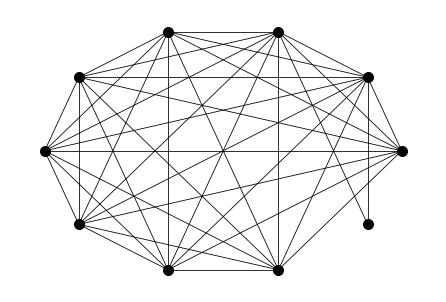}
              \caption{The graph with adjacency matrix $A$ in Eq.(\ref{eqadjmatrixexboosting}).}
              \label{fig:isolated}
          \end{figure}

          Let 
          
          \begin{equation}
              \label{eqratioboost}
        \mathcal{R}:=\frac{p_w(\mathbf{n}|\mathbf{n_{in}})}{p(\mathbf{n}|\mathbf{n_{in}})}.
          \end{equation}
          In Figure \ref{fig:plotboostingaw}, we have plotted the curve of $\mathcal{R}$ as a function of $w$ for the  graph of Figure \ref{fig:isolated} and Eq.(\ref{eqadjmatrixexboosting}). 
          
         \begin{figure}[h]
              \centering
              \includegraphics[scale=0.3]{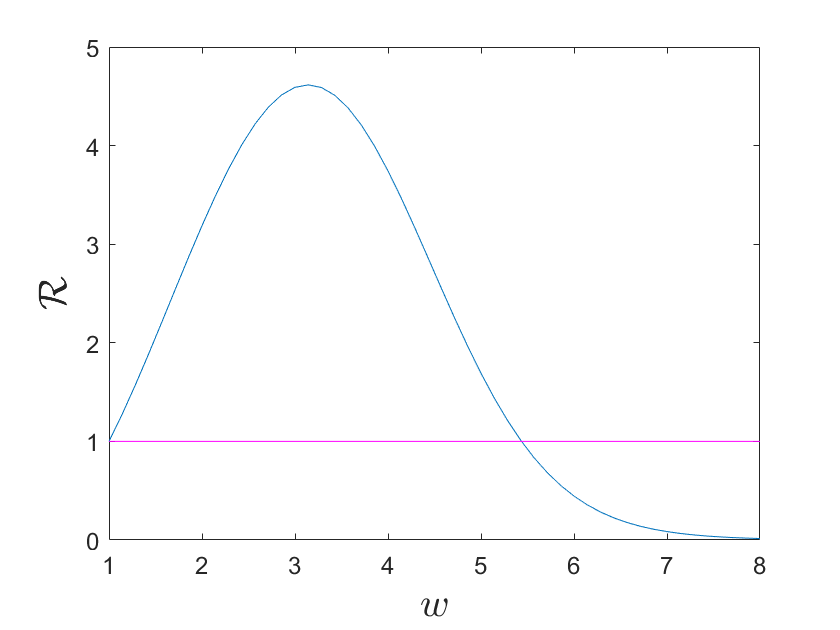}
              \caption{$\mathcal{R}$  (blue curve) as a function of $w$ for the graph of Eq.(\ref{eqadjmatrixexboosting}). Any value of $\mathcal{R}>1$ (above the horizontal purple line) indicates boosting.}
              \label{fig:plotboostingaw}
          \end{figure}
           As can be seen in Figure \ref{fig:plotboostingaw}, we can boost the probability $p(\mathbf{n}|\mathbf{n_{in}})$ up to $\approx 4.5$ times its value by using our boosting technique on the graph of Eq.(\ref{eqadjmatrixexboosting}). However, note that the boosting is not indefinite, as there is a value $w_0$ of $w$ beyond which using our technique results in lower probabilities (in Figure \ref{fig:plotboostingaw}, $w_0 \approx 5.5$). For a given fixed $n$ this behaviour is to be expected. Indeed, by looking at the upper and lower bounds of $\sigma_{max}(A_w)$ in lemma \ref{lemboundsAw} for $w\gg1$, it can be seen that these both increase linearly with the increase in $w$, and so $\sigma_{max}(A_w) \approx O(w)$. Therefore, for fixed $n>1$ , we get something like 
           \begin{equation*}
               \frac{\sigma_{max}(A)}{\sigma_{max}(A_w)} \approx O\left(\frac{1}{w}\right) << \frac{1}{w^{\frac{1}{n}}},
           \end{equation*}
           meaning that the condition in lemma \ref{lem2boosting} is violated, and consequently no boosting is anymore possible.
           
           It is interesting to speculate whether the apparent impossibility of indefinite boosting sheds light on the \emph{fundamental} incapability of quantum devices to efficiently solve $\sharp$P-hard problems, namely in this case exactly computing the permanent of an $n \times n$ matrix \cite{aaronson2011linear,valiant1979complexity}. Unfortunately, we have not been able to advance in addressing this fascinating question. 
          
          \subsection*{Second method for boosting}
          Our second technique for boosting is to boost by considering the \emph{modified} adjacency matrix
          \begin{equation*}
               \tilde{A}_{\varepsilon}=A+\varepsilon \mathbb{I}_{n \times n},
          \end{equation*}
          where $\varepsilon \in \mathbb{R}^{+}$. We will consider matrices $A \in \mathcal{M}_n(\mathbb{R}^{+})$ with non-negative entries. In this case, we have
         \begin{equation*}
            \mathsf{Per}( \tilde{A}_{\varepsilon}) \geq \mathsf{Per}(A).
          \end{equation*}
          Furthermore, $\mathsf{Per}(A)$ can be recovered from computing $\tilde{A}_{\varepsilon}$ at $n+1$ values of $\varepsilon$, then deducing $\mathsf{Per}(A)$, as is done for permanental polynomials in Section \ref{app:permpoly}.
          Let 
          \begin{equation}
          \label{eqpeps}
              p_{\varepsilon}(\mathbf{n}|\mathbf{n_{in}}):=\frac{|\mathsf{Per}(\tilde{A}_{\varepsilon})|^2}{\sigma^{2n}_{max}(\tilde{A}_{\varepsilon})}.
          \end{equation}
          What is remarkable about $p_{\varepsilon}(\mathbf{n}|\mathbf{n_{in}})$ is that, for fixed $n$, it can be made arbitrarily close to one, with increasing $\varepsilon$. To see this, consider the case where $\varepsilon\gg\mathsf{max}_{i,j}(a_{ij})$, where  $\mathsf{max}_{i,j}(a_{ij})$ is the maximum entry of $A$. In this case we have
          \begin{equation*}
              \mathsf{Per}(\tilde{A}_{\varepsilon}) \approx \mathsf{Per}(\varepsilon \mathbb{I}_{n \times n})=\varepsilon^n.
          \end{equation*}
          Also,
          \begin{equation*}
              \sigma_{max}(\tilde{A}_{\varepsilon}) \approx \sigma_{max}(\varepsilon \mathbb{I}_{n \times n})=\varepsilon.
              \end{equation*}
              Plugging these into Eq.(\ref{eqpeps}) gives
              \begin{equation*}
                  p_{\varepsilon}(\mathbf{n}|\mathbf{n_{in}}) \approx 1.
              \end{equation*}

             At this point, one is tempted to say that the boosting provided by this method is indefinite. This is a misleading conclusion however, but the reason why is subtle. In the rest of this section, we will aim to expose this subtlety, and understand under what conditions this technique provides a useful boosting.

             First, we need to prove
             \begin{lemma}
             \label{eqidea2lem1}
             $\sqrt{\sigma^2_{min}(A)+\frac{2 \varepsilon \mathsf{Trace}(A)}{n}+ \varepsilon^2} \leq \sigma_{max}(\tilde{A}_{\varepsilon}) \leq \sqrt{n}(\norm{A}_{\infty}+\varepsilon)$, where $\sigma_{min}(A)$ is the lowest singular value of $A$.
             \end{lemma}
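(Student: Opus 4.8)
The plan is to bound $\sigma_{max}(\tilde A_\varepsilon)$ by controlling the trace $\mathsf{Trace}(\tilde A_\varepsilon \tilde A_\varepsilon^T)$ from both sides, exactly as was done for $A_w$ in Lemma~\ref{lemboundsAw}. For the upper bound, I would start from the submultiplicativity/triangle inequality for the spectral norm: $\sigma_{max}(\tilde A_\varepsilon) = \norm{A+\varepsilon\mathbb{I}_{n\times n}} \leq \norm{A} + \varepsilon\norm{\mathbb{I}_{n\times n}} = \norm{A}+\varepsilon$. Then applying Eq.~(\ref{eqidtyainftya}), i.e. $\norm{A}\leq\sqrt{n}\norm{A}_\infty$, and crudely bounding $\varepsilon \leq \sqrt{n}\,\varepsilon$ gives $\sigma_{max}(\tilde A_\varepsilon) \leq \sqrt{n}(\norm{A}_\infty+\varepsilon)$, which is the claimed right-hand side.

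For the lower bound I would mimic the Frobenius-norm argument of Lemma~\ref{lemboundsAw}. Write $\mathsf{Trace}(\tilde A_\varepsilon \tilde A_\varepsilon^T) = \mathsf{Trace}\big((A+\varepsilon\mathbb{I}_{n\times n})(A^T+\varepsilon\mathbb{I}_{n\times n})\big) = \mathsf{Trace}(AA^T) + 2\varepsilon\,\mathsf{Trace}(A) + \varepsilon^2 n$, using $\mathsf{Trace}(A^T)=\mathsf{Trace}(A)$ and $\mathsf{Trace}(\mathbb{I}_{n\times n})=n$. Since $\sum_i \sigma_i^2(\tilde A_\varepsilon) = \mathsf{Trace}(\tilde A_\varepsilon \tilde A_\varepsilon^T) \leq n\,\sigma_{max}^2(\tilde A_\varepsilon)$, dividing by $n$ and taking square roots yields
\begin{equation*}
\sigma_{max}(\tilde A_\varepsilon) \geq \sqrt{\frac{\mathsf{Trace}(AA^T)}{n} + \frac{2\varepsilon\,\mathsf{Trace}(A)}{n} + \varepsilon^2}.
\end{equation*}
It then remains to replace $\mathsf{Trace}(AA^T)/n = \big(\sum_i \sigma_i^2(A)\big)/n$ by the weaker lower bound $\sigma_{min}^2(A)$, which holds trivially since $\sigma_{min}(A) \leq \sigma_i(A)$ for every $i$, so $n\,\sigma_{min}^2(A) \leq \sum_i\sigma_i^2(A)$. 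This gives precisely the stated left-hand side.

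The main subtlety — rather than a genuine obstacle — is the direction of the final estimate in the lower bound: one must be careful that replacing $\mathsf{Trace}(AA^T)/n$ by $\sigma_{min}^2(A)$ weakens the bound in the correct (safe) direction, and that the cross term $2\varepsilon\,\mathsf{Trace}(A)/n$ is genuinely present with a plus sign, which relies on $A$ having non-negative diagonal contributions to the trace when we later want monotonicity in $\varepsilon$ (here $A$ is an adjacency matrix with zero diagonal, so $\mathsf{Trace}(A)=0$ and the cross term simply vanishes, but the lemma is stated generally). Everything else is routine manipulation of the trace identity Eq.~(\ref{eqtrab}) and the inequalities Eqs.~(\ref{eqrelsigmaainfty})--(\ref{eqidtyainftya}) already established in this appendix; no new machinery is needed.
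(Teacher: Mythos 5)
Your proposal is correct and follows essentially the same route as the paper: the upper bound via the triangle inequality combined with $\norm{\cdot}\leq\sqrt{n}\norm{\cdot}_\infty$ (the paper applies the triangle inequality to $\norm{\cdot}_\infty$ directly, a cosmetic difference), and the lower bound via expanding $\mathsf{Trace}(\tilde{A}_{\varepsilon}\tilde{A}_{\varepsilon}^T)$, bounding it by $n\sigma_{max}^2(\tilde{A}_{\varepsilon})$, and replacing $\sum_i\sigma_i^2(A)$ by $n\sigma_{min}^2(A)$. No gaps.
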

             \begin{proof}
              To compute the upper bound, recall the identity
              $\sigma_{max}(\tilde{A}_{\varepsilon}) \leq \sqrt{n}\norm{\tilde{A}_{\varepsilon}}_{\infty}$.
              Now, $\norm{\tilde{A}_{\varepsilon}}_{\infty}=\norm{A+\varepsilon \mathbb{I}_{n \times n}}_{\infty} \leq \norm{A}_{\infty}+\varepsilon\norm{\mathbb{I}_{n \times n}}_{\infty}=\norm{A}_{\infty}+\varepsilon$, where the last part of this equation follows from applying the triangle inequality for norms. Plugging this into the above identity completes the proof for the upper bound.
              
              For the lower bound, consider
              \begin{multline}
              \label{eqidea21}
        \mathsf{Trace}(\tilde{A}_{\varepsilon}\tilde{A}^T_{\varepsilon})=\mathsf{Trace} \big( (A+\varepsilon \mathbb{I}_{n \times n})(A^T+\varepsilon \mathbb{I}_{n \times n})
                  \big)=\\ \mathsf{Trace}(AA^T)+2\varepsilon \mathsf{Trace}(A)+ \varepsilon^2\mathsf{Trace}(\mathbb{I}_{n \times n})=\\ \mathsf{Trace}(AA^T)+2\varepsilon \mathsf{Trace}(A)+ n\varepsilon^2=\\ \sum_{i}\sigma^2_{i}(A)+2\varepsilon \mathsf{Trace}(A)+ n\varepsilon^2.
               \end{multline}
         Now,
         \begin{equation*}
             \mathsf{Trace}(\tilde{A}_{\varepsilon}\tilde{A}^T_{\varepsilon})=\sum_{i}\sigma^2_{i}(\tilde{A}_{\varepsilon}) \leq n \sigma^2_{max}(\tilde{A}_{\varepsilon}),
         \end{equation*}
         and
         \begin{equation*}
             \sum_{i}\sigma^2_{i}(A)+2\varepsilon \mathsf{Trace}(A)+ n\varepsilon^2 \geq n \sigma^2_{min}(A)+2\varepsilon \mathsf{Trace}(A)+ n\varepsilon^2.
         \end{equation*}
         Plugging these into Eq.(\ref{eqidea21}) gives
         \begin{equation}
             \label{eqlowboundaepsilon}
             n \sigma^2_{max}(\tilde{A}_{\varepsilon}) \geq n \sigma^2_{min}(A)+2\varepsilon \mathsf{Trace}(A)+ n\varepsilon^2.
         \end{equation}
         Dividing both sides of Eq.(\ref{eqlowboundaepsilon}) by $n$ then taking the square root results in the desired lower bound. This concludes the proof of lemma \ref{eqidea2lem1}.
          \end{proof}
          
          Recall  that we can write 
          \begin{equation*}
              \mathsf{Per}(\tilde{A}_{\varepsilon})=\sum_{i=0,\dots,n}c_{i}\varepsilon^i,
          \end{equation*}
             with $c_{0}=\mathsf{Per}(A)$, and $c_{n}=\mathsf{Per}(\mathbb{I}_{n \times n})=1$, and $c_i \geq 0$ since they are related to sums of permanents submatrices of $A$ \cite{merris1981permanental}. Let $\lambda_1:=\mathsf{max}(c_0,c_1,\dots,c_n)$ and $\lambda_2:=\mathsf{min}(c_0,c_1,\dots,c_n)$ be the maximum and minimum values of the coefficients $c_i$. We now prove that
             \begin{lemma}
             \label{lemboundsperaepsilon}
             $ \lambda_2 \frac{\varepsilon^{n+1}-1}{\varepsilon -1 }\leq \mathsf{Per}(\tilde{A}_{\varepsilon}) \leq \lambda_1 \frac{\varepsilon^{n+1}-1}{\varepsilon -1 }$.
             \end{lemma}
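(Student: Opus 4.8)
The plan is to work directly from the polynomial identity $\mathsf{Per}(\tilde{A}_{\varepsilon})=\sum_{i=0,\dots,n}c_{i}\varepsilon^i$ recalled just above the statement, exploiting two facts: every coefficient satisfies $\lambda_2 \leq c_i \leq \lambda_1$ by definition of $\lambda_1,\lambda_2$, and every power $\varepsilon^i$ is non-negative since $\varepsilon \in \mathbb{R}^{+}$. These two facts let one bound the sum termwise.

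Concretely, I would first note that for each $i\in\{0,\dots,n\}$ multiplying the inequality $\lambda_2 \leq c_i \leq \lambda_1$ by the non-negative quantity $\varepsilon^i$ preserves its direction, giving $\lambda_2\varepsilon^i \leq c_i\varepsilon^i \leq \lambda_1\varepsilon^i$. Summing over $i$ from $0$ to $n$ then yields
\begin{equation*}
\lambda_2\sum_{i=0}^{n}\varepsilon^i \;\leq\; \mathsf{Per}(\tilde{A}_{\varepsilon}) \;\leq\; \lambda_1\sum_{i=0}^{n}\varepsilon^i.
\end{equation*}
The final step is to evaluate the geometric sum: for $\varepsilon\neq 1$ one has $\sum_{i=0}^{n}\varepsilon^i=\frac{\varepsilon^{n+1}-1}{\varepsilon-1}$, and substituting this into both sides gives exactly the claimed two-sided bound. (The degenerate case $\varepsilon=1$ is recovered as a limit, where both the sum and the quoted closed form equal $n+1$, so the statement continues to hold.)

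There is no real obstacle here; the only point requiring a moment's care is that the termwise inequalities are preserved precisely because $\varepsilon$ is restricted to be non-negative — for negative $\varepsilon$ the odd powers would reverse the inequality and the bound would be invalid. The non-negativity of the $c_i$, already noted from their interpretation as sums of permanents of submatrices of $A\in\mathcal{M}_n(\mathbb{R}^{+})$, is used only implicitly (it guarantees $\lambda_2\geq 0$, so the lower bound is itself meaningful), but is not needed for the inequality chain itself.
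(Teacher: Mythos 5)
Your proof is correct and follows essentially the same route as the paper's: termwise bounding of the coefficients $c_i$ by $\lambda_1$ and $\lambda_2$, followed by the geometric series identity $\sum_{i=0}^{n}\varepsilon^i=\frac{\varepsilon^{n+1}-1}{\varepsilon-1}$. Your added remarks on the $\varepsilon=1$ limit and on why non-negativity of $\varepsilon$ matters are sensible refinements but do not change the argument.
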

             \begin{proof}
             The proof of the upper bound follows first from noting that
             $\sum_{i=0,\dots,n}c_i\varepsilon^i\leq \lambda_1 \sum_{i=0,\dots,n}\varepsilon^i$, then by using the geometric series identity $\sum_{i=0,\dots,n}\varepsilon^i=\frac{\varepsilon^{n+1}-1}{\varepsilon -1 }$. The proof of the lower bound is similar, but the starting point is $\sum_{i=0,\dots,n}c_i\varepsilon^i\geq \lambda_2 \sum_{i=0,\dots,n}\varepsilon^i$.
             \end{proof}
             We will now consider the case where $\varepsilon \to \infty$ and $n$ is fixed. In this case, lemma \ref{eqidea2lem1} implies
             \begin{equation}
             \label{eqasympsigmaaeps}
             \sigma_{max}(\tilde{A}_{\varepsilon}) \approx O(\varepsilon).
             \end{equation}
             Similarly, lemma \ref{lemboundsperaepsilon} gives
             \begin{equation}
             \label{eqasympperaepsilon}
                 \mathsf{Per}(\tilde{A}_{\varepsilon}) \approx O(\varepsilon^n).
             \end{equation}
             
             With these equations in hand, we will now argue that, after a certain point, estimating $\mathsf{Per}(A)$ starting from $\mathsf{Per}(\tilde{A}_{\varepsilon})$ will require a higher sample complexity (number of experiments needed to be performed) than estimating $\mathsf{Per}(A)$ directly. This will show why, although the probabilities $p_{\varepsilon}(\mathbf{n}|\mathbf{n_{in}})$ can be boosted indefinitely with our method, our method will cease being advantageous after a certain value of $\varepsilon$.
             
             Recall that in order to estimate probabilities in our setup to within additive error $\frac{1}{\kappa}$, we require $O(\kappa^2)$ samples from standard statistical arguments \cite{hoeffding1994probability}. In order to estimate $|\mathsf{Per}(\tilde{A}_{\varepsilon})|^2$  to a good precision in our setup, we need $\frac{1}{\kappa} \approx O(\frac{1}{\sigma^{2n}_{max}(\tilde{A}_{\varepsilon})}) \approx O(\frac{1}{\varepsilon^{2n}})$, since the output probabilities (proportional to $|\mathsf{Per}(\tilde{A}_{\varepsilon})|^2$) are scaled down by $\sigma^{2n}_{max}(\tilde{A}_{\varepsilon})$ in our setup. 
            Thus,  the total number of experiments we need to perform to estimate $\mathsf{Per}(\tilde{A}_{\varepsilon})$ (and therefore estimate from it $\mathsf{Per}(A)$) is
             \begin{equation}
                 \label{eqexperiments1}
                 \mathcal{E}_{\tilde{A}_{\varepsilon}}=O( \kappa^2)=\sigma^{4n}_{max}(\tilde{A}_{\varepsilon})=O(\varepsilon^{4n}).
             \end{equation}
             By a similar argument, directly estimating $A$ by encoding $A$ into our setup without modification and collecting samples requires
             \begin{equation}
                 \label{eqexperiment2}
                 \mathcal{E}_{A}=O(\sigma^{4n}_{max}(A))=O(1),
             \end{equation}
             since $n$ is fixed.
             It is now clear that, with $\varepsilon$ increasing, there is a point $\varepsilon_0$ after which $$\mathcal{E}_{\tilde{A}_{\varepsilon}}>\mathcal{E}_{A}.$$ At this point, it will  no longer be advantageous to use our modification to compute the permanent of $A$. 
             
             \bigskip 
             
             As a concluding remark for this section, although the methods we discussed here for boosting do not provide an indefinite advantage, they may nevertheless be useful to obtain advantages in practice, especially in the context of NISQ hardware where the number of photons  $n$ in our setup is small-to-modest.
            
             \section{Numerics}
             \label{app:numerics}

             Here we provide numerical support for Theorem \ref{thperm}, stating that the permanent of a given graph (with even number of vertices and edges) is upper bounded by a monotonically increasing function of the number of edges. This is at the heart of why  our setup can be used to identify dense subgraphs, as denser subgraphs tend to appear more when sampling. We performed numerical tests for random graphs of different number of vertices, and increasing number of edges per each vertex number. The results are plotted in Figures \ref{permex1} and \ref{permex2}. We can indeed observe, as predicted by Theorem \ref{thperm}, that the exact value of the permanent increases with the graph edge probability.
             \begin{figure}[h]
             \centering
              \includegraphics[scale=0.4]{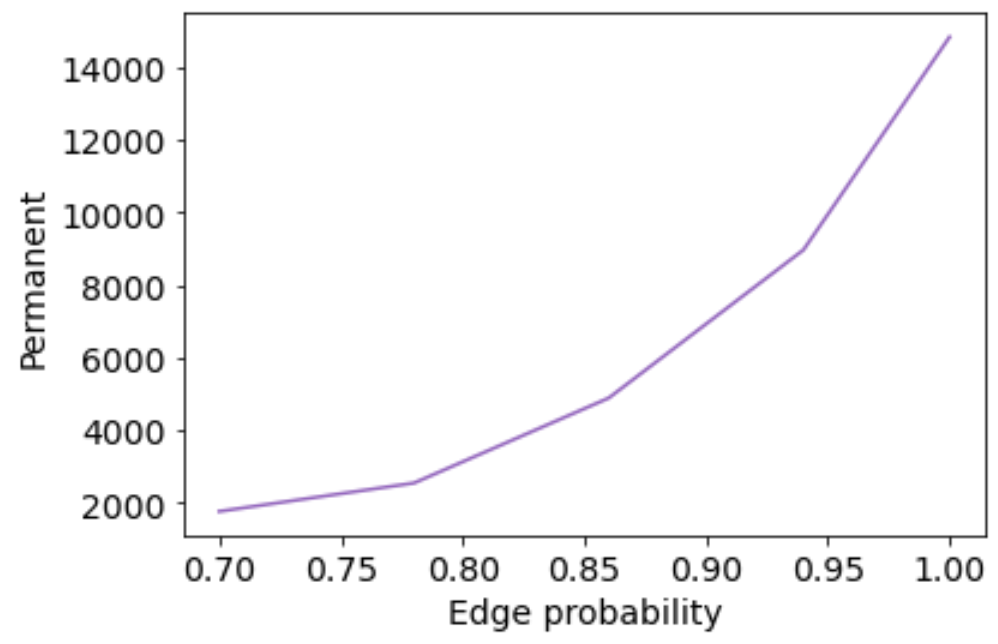}
              \caption{Mean value of the permanent of 15 randomly generated graphs of 8 vertices plotted in function of edge probability. The edge probability represents the probability that any two vertices $i$ and $j$ of the randomly generated graph are connected by an edge.}
              \label{permex1}
             \end{figure}
             \begin{figure}[h]
             \centering
              \includegraphics[scale=0.4]{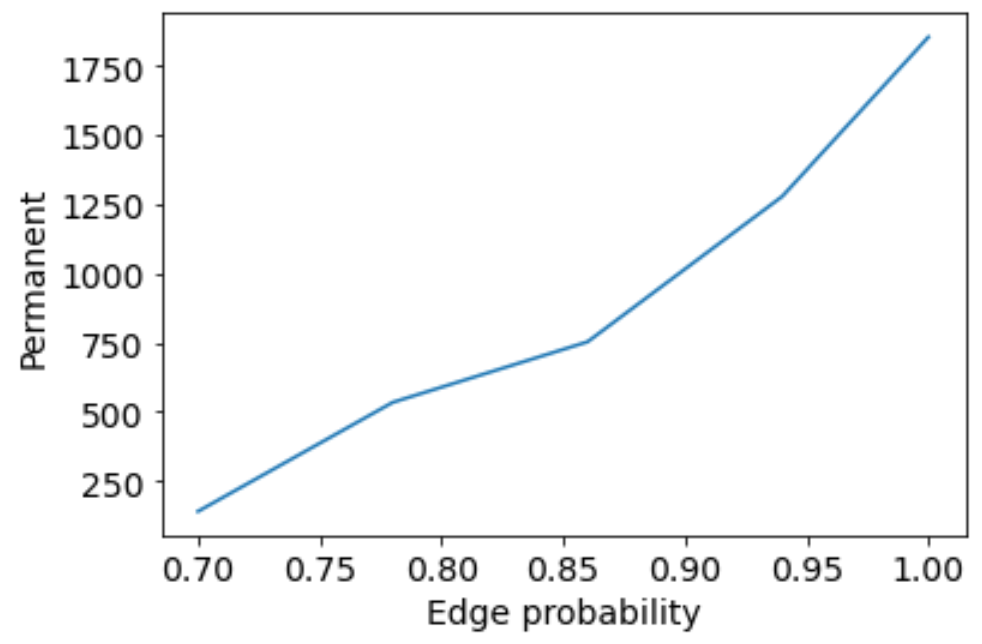}
              \caption{Mean value of the permanent of 15 randomly generated graphs of 7 vertices plotted in function of edge probability.}
              \label{permex2}
             \end{figure}

             Finally, we constructed code to test our first method for boosting (see Appendix \ref{app:boosting}).
We considered the graph of Figure \ref{fig:boosting1}, which has the following adjacency matrix
\begin{equation}
\label{eqboostnum1}
    A=\begin{pmatrix} 0& 1& 1& 1& 1& 0 \\ 1& 0& 1& 1& 1& 1 \\ 1& 1& 0& 1& 1& 0 \\1& 1& 1& 0& 1& 0 \\ 1& 1& 1& 1& 0& 0 \\ 0& 1& 0& 0& 0& 0 \end{pmatrix}.
\end{equation}

\begin{figure}[h]
    \centering
    \includegraphics[scale=0.3]{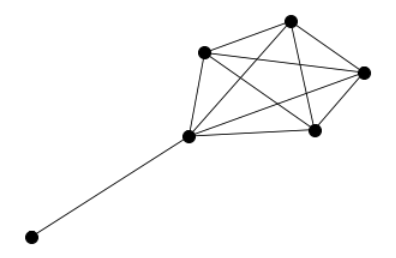}
    \caption{Test graph for boosting.}
    \label{fig:boosting1}
\end{figure}
Note that $\mathsf{Per}(A)=9$. Multiplying the last row of the matrix in Eq.(\ref{eqboostnum1}) by $w \in \{1,2,3,4,5,6\}$, we obtain a matrix
\begin{equation}
\label{eqboostnum2}
    A_w=\begin{pmatrix} 0& 1& 1& 1& 1& 0 \\ 1& 0& 1& 1& 1& 1 \\ 1& 1& 0& 1& 1& 0 \\1& 1& 1& 0& 1& 0 \\ 1& 1& 1& 1& 0& 0 \\ 0& w& 0& 0& 0& 0 \end{pmatrix}.
\end{equation}
For each value of $w \in \{1,2,3,4,5,6\}$, we computed an estimate of $\mathsf{Per}(A_w)$, and deduced from it an estimate of $\mathsf{Per}(A)$ (see Eq.(\ref{eqrelAwA})) using 100 post-selected samples, and recorded the time needed to collect those samples. We report these results in Table \ref{ta2}. As can be clearly observed in Table \ref{ta2}, we observe boosting for $w \in \{2,3,4\}$, as manifested in the time needed to compute an estimate of the permanent with these values of $w$ versus the time needed to compute it with no modification ($w=1$) of the adjacency matrix. We also observe that multiplying by $w>4$
ceases to boost the desired output probabilities.
\begin{table}[h]
\centering
\begin{tabular}{|l|l|l|}
\hline
$w$ & Permanent estimation & Time \\ \hline
1     & 8.776  & 104min 43.6s   \\ \hline%
2     & 8.694  & 35min 5.2s    \\ \hline %
3     & 8.613  & 30min 13.8s    \\\hline %
4     & 9.303  & 51min 18.5s      \\ \hline%
5     & 9.637  & 158min 27.4s   \\  \hline%
6     & -----  & $>$ 200min   \\  \hline%
\end{tabular}
\caption{Results of testing boosting for different multiplication values by multiplying by $w$ node $5$ of graph represented in Figure \ref{fig:boosting1}. The middle column of this table contains estimates of $\mathsf{Per}(A)$ for each value of $w$ tested, and the rightmost column contains the times required to compute these estimates.}
\label{ta2}
\end{table}

\end{document}